\newtheorem{theorem}{Theorem}
\newtheorem{remark}{Remark}
\newtheorem{assumption}{Assumption}
\newtheorem{lemma}{Lemma}
\newtheorem{proposition}{Proposition}
\newcommand{\E}{\mathbb{E}}
\newcommand{\Var}{\mathrm{Var}}
\newcommand{\xhat}{\widehat{\Theta}}
\newcommand{\xtil}{\widetilde{\Theta}}
\newcommand{\tilY}{\widetilde{Y}}
\newcommand{\sumNc}{\sum_{i=1}^{N_c}}
\newcommand{\sumC}{\sum_{c=1}^C}
\newcommand{\Dcij}{\Delta_{ij\mid c}}
\newcommand{\Dcijk}{\Delta_{ijk\mid c}}
\newcommand{\Dcijkl}{\Delta_{ijkl\mid c}}
\title{The Exact Variance of the Average Treatment Effect Estimator in Cluster Randomized Controlled Trials}
\author{Yue Fang\thanks{School of Management and Economics, The Chinese University of Hong Kong, Shenzhen. Email: \url{fangyue@cuhk.edu.cn.}} \quad Geert Ridder\thanks{Department of Economics, University of Southern California. Email: \url{ridder@usc.edu.}}}
\begin{document}

\maketitle
\begin{abstract}
    In cluster randomized controlled trials (CRCT) with a finite populations, the exact design-based variance of the Horvitz-Thompson (HT) estimator for the average treatment effect (ATE) depends on the joint distribution of unobserved cluster-aggregated potential outcomes and is therefore not point-identifiable. We study a common two-stage sampling design-random sampling of clusters followed by sampling units within sampled clusters-with treatment assigned at the cluster level. First, we derive the exact (infeasible) design-based variance of the HT ATE estimator that accounts jointly for cluster- and unit-level sampling as well as random assignment. Second, extending \cite{aronow2014sharp}, we provide a sharp, attanable upper bound on that variance and propose a consistent estimator of the bound using only observed outcomes and known sampling/assignment probabilities. In simulations and an empirical application, confidence intervals based on our bound are valid and typically narrower than those based on cluster standard errors.
\end{abstract}

\section{Introduction} \label{sec:intro}
Cluster randomized controlled trials are prevalent in applied economics \citep{baranov2020maternal,attanasio2020estimating,dhar2022reshaping}. Populations are partitioned into natural groups—cities, villages, schools, classrooms—and randomization occurs at the cluster level while outcomes are measured on individuals within clusters. We take a finite‑population, design‑based perspective with two sources of randomness, following \citet{abadie2020sampling,abadie2023should}: (i) sampling of the empirical study from a finite target population and (ii) treatment assignment. We explicitly allow for two‑stage sampling: first sample clusters, then sample units within sampled clusters; treatment is assigned at the cluster level.

We analyze design‑based inference for the finite‑population ATE when exposure is assigned by cluster and data are collected via two‑stage sampling. The estimand is the average of unit‑level potential outcomes in the finite population. Under known sampling and assignment probabilities, the Horvitz–Thompson (HT) estimator is unbiased. However, its exact variance depends on cluster‑aggregated potential outcomes that are not jointly observed (the familiar ``cross‑arm covariance'' problem), so the exact variance cannot be consistently estimated without additional assumptions.

A finite‑population perspective is natural for cluster randomized experiments with two‑stage sampling. It has long roots in survey sampling and has become increasingly important in causal inference and program evaluation, where researchers seek design‑faithful uncertainty quantification that conditions on the realized finite population and the actual sampling/assignment mechanisms. In parallel, a growing line of work emphasizes finite‑sample, design‑based inference for experiments as implemented. For example, \cite{heckman2021using} develop worst‑case randomization tests for compromised or partially documented experiments, modeling experimenters who ``satisfice'' on covariate balance and delivering exact small‑sample procedures that are robust to rerandomization and post‑assignment transfers. Their framework motivates finite‑sample, design‑based inference when assignment protocols are only partially known, and it explicitly targets settings where conventional asymptotics can be misleading. More broadly, recent contributions clarify when design‑ versus sampling‑based uncertainty is appropriate \citep{abadie2020sampling}, develop theory for rerandomization and studentized randomization tests \citep{morgan2015rerandomization,li2018asymptotic,chung2016multivariate}, study experiments ``as implemented'' rather than ``as intended'' \citep[e.g.,][]{heckman2010analyzing}, and assess the reliability of asymptotics in small experimental samples \citep{young2019channeling}. Our paper contributes to this agenda by deriving exact finite‑population variance formulas and sharp (and conservative) variance estimators for cluster randomized trials with two‑stage sampling, and by establishing asymptotic normality under weak moment and heterogeneity conditions—supplying practical, design‑based uncertainty quantification tailored to the settings most common in field experiments.

In particular, we derive the exact design‑based variance of the HT estimator of the ATE that incorporates randomness from cluster sampling, unit sampling, and cluster‑level assignment. Relative to designs without sampling or without clustering, two‑stage sampling introduces identifiable within‑ and between‑cluster variance components. Extending the sharp‑bounds approach of \citet{aronow2014sharp} to two‑stage sampling with cluster assignment, we obtain a sharp, consistently estimable upper bound for the variance by applying the bounds to the joint distribution of the estimated cluster averages—thereby embedding both cluster‑ and unit‑sampling noise. We then construct a consistent estimator of this bound using only observed data and known inclusion/assignment probabilities. In simulations and in an application, Wald intervals that plug in our bound estimator are typically tighter than intervals based on clustered standard errors (Liang–Zeger), which can be overly conservative in this design, while maintaining nominal coverage.

Our paper is closely connected to the literature on cluster randomization and design-based inference. A large literature studies experiments in which treatment is assigned at the cluster level and outcomes are measured on units within clusters. Early work formalized design-based estimands under cluster assignment, and subsequent papers developed randomization-based variance expressions and estimators for cluster-level designs \citep{middleton2015unbiased, su2021model, schochet2022design, wang2024design}.  Much of this literature treats the sample as given, i.e. conditioning on the set of observed clusters and units and focuses on assignment-induced uncertainty. We differ by explicitly integrating two-stage sampling uncertainty, sampling clusters, then sampling units within sampled clusters, into the variance of the Horvitz-Thompson (HT) estimator for ATE. This adds variance components that persist even if assignment were held fixed, and it changes the object that a variance estimator must target.

Our analysis follows the finite-population, design-based perspective emphasized by \cite{abadie2020sampling, abadie2023should} and by the survey-sampling literature on two-stage designs \citep{ohlsson1989asymptotic, chauvet2020inference}. Those references clarify that sampling and assignment are distinct randomization mechanisms whose contributions to uncertainty should be kept separate. Relative to these, we specialize to cluster-level assignment and deliver (i) the exact design-based variance of the HT estimator that aggregates both stages of sampling ans treatment assignment, and (ii) a sharp, consistently estimable upper bound for that variance when it is not point-identified due to missing cluster-aggregated potential outcomes.

\cite{aronow2014sharp} show that when cross-potential-outcome terms are unobserved, the variance of common design-based estimators may not be point-identified but admits a sharp upper bound under the randomization. We extend that sharp-bounds logit to two-stage sampling with cluster-level treatment assignment: the unobserved terms are cluster-aggregated potential outcomes, and the feasible bound must respect both sampling inclusion probabilities and treatment probabilities, being applied to the distribution of the estimated quantities. Our consistent estimator targets this bound, and hence supports valid yet typically tighter intervals than generic clustered standard errors in our settings.

Moreover, \cite{bugni2022inference} analyze the cluster-randomized experiments from a superpopulation perspective in which potential outcomes and even sample size are random variables. As they note (Remark 3.4), finite-population precision gains stemming from observing a large share of the population cannot be consistently estimated because counterfactual potential outcomes are missing. We adopt the finite-population perspective and show how to recover a portion of precision gains via a sharp design-based variance bound, i.e., we trade point identification of the variance for a tight, attainable bound that is consistently estimable under the actual design.

The HT estimator weights observed outcomes by the inverse of the product of sampling and assignment probabilities. With cluster-level assignment, each sampled cluster reveals only one of its two cluster-aggregated potential outcomes, so the covariance term that involves the joint distribution of the potential outcomes in the variance cannot be pinned down. Our sharp bounds identifies the largest variance compatible with the design, the realized treatment assignments, and the observed outcomes. Because the bound respects the actual sampling and assignment probabilities, it is often materially less conservative than generic clustered standard errors. In practice, many applied papers report cluster standard errors computed from unit-level regressions, sometimes with or without survey weights. As emphasized by \cite{abadie2020sampling, abadie2023should} and others, such standard errors mix model-based approximation with design-based features and can be over- or under-conservative depending on the design, especially under unequal inclusion probabilities or when the estimand is finite-population.

Throughout the paper, we consider no interference across clusters, known sampling and assignment probabilities, possibly unequal across clusters, and treatment assignment at the cluster level (compliance issues are outside our scope). The sampling scheme is conducted without replacement at each stage and may allow probability-proportional-to-size draws.

The structure of this paper is as follows. Section \ref{subsec:set} formalizes the two-stage sampling and cluster-level assignment and introduce notation. Section \ref{sec:estimator} defines the HT estimator for the finite-population ATE and gives its exact design-based variance. Section \ref{sec:asym} develops asymptotic properties under increasing numbers of clusters in a sequence of finite populations. Section \ref{sec:bound}  extends \cite{aronow2014sharp} to derive a sharp variance bound and proposes a consistent estimator of the bound. Section \ref{sec:simulation} presents Monte Carlo evidence. Section \ref{sec:application} applies our methods to a school-level cluster randomized controlled trial and compares bound-based intervals to clustered standard errors. 

\section{Set up}\label{subsec:set}

We consider a finite population of size $N$, consisting of $C$ clusters. Cluster $c\in\{1,2,\dots,C\}$, contains $N_c$ units, and we take $N$, $C$, and all $N_c$ as fixed and known. This assumption is plausible in many field settings (e.g., development economics) where researchers typically gather population information from the study area, have census lists, or have access to administrative records. For asymptotic results, we consider a sequence of such finite populations, with the number of clusters $C$ going to infinity, allowing for heterogeneous cluster size $N_c$.

Randomness arises from both sampling and treatment assignment. The sampling process has two stages. Firstly, $S$ clusters are randomly selected from $C$ clusters without replacement. Secondly, in a sampled cluster $c$, $n_c$ units are sampled at random from $N_c$ units without replacement. The treatment assignment is independent of both sampling stages by design. The treatment is on the cluster level: $S_1$ out of the $S$ sampled clusters are randomly assigned to treatment without replacement, and the remaining $S_0=S-S_1$ clusters are assigned to control. All units within a given cluster receive the same treatment status. Only the sampled units are observed and used for analysis. 

In practice, the intervention is delivered to the entire cluster in some experiments, even though only a subset of units is sampled for measurement; and in some designs, the intervention is administered only to the sampled units. Our analysis formally accommodates both cases. However, we note that in the latter setting, extending conclusions to the entire cluster may involve extrapolation when considering potential spillover effects--issues that go beyond the scope of this paper. These considerations are related to the broad literature on two-stage randomization designs, which explicitly study within-cluster and between-cluster treatment allocations in the existence of spillover.

Let $Y_{ci}(1)$ and $Y_{ci}(0)$ denote the potential outcomes for unit $i$ in cluster $c$ under treatment and control, respectively. The potential outcomes are taken as fixed. We introduce several indicator variables to describe the sampling and assignment process.

Define $R_c \in \{0,1\}$ as the indicator variable denoting whether cluster $c$ is sampled ($R_c = 1$) or not ($R_c = 0$) in the first-stage sampling, and $D_c \in \{0,1\}$ as the treatment assignment indicator for cluster $c$, where $D_c = 1$ if $c$ is assigned to treatment and $D_c = 0$ if $c$ is assigned to control. Let $R_{i \mid c} \in \{0,1\}$ be the indicator of whether unit $i$ in cluster $c$ is sampled in the second-stage sampling, conditional on that cluster $c$ is sampled ($R_c = 1$) in the first stage. The overall sampling indicator for unit $i$ in cluster $c$ is then denoted as $R_{ci} = R_c R_{i \mid c}$.

The sampling and treatment assignment processes induce the following probabilities. The probability that cluster $c$ is sampled is
\[
p = \Pr(R_c = 1) = \frac{S}{C},
\]
where $S$ is the number of sampled clusters out of $C$ total clusters.

Within a sampled cluster $c$, the probability of unit $i$, which belongs to cluster $c$, being sampled is
\[
\pi_c = \Pr(R_{ci}=1\mid R_c=1) = \Pr(R_{i \mid c} = 1) = \frac{n_c}{N_c},
\]
where $n_c$ is the number of sampled units within cluster $c$ and $N_c$ is the total number of units in that cluster.

Conditional on being sampled, the probability that cluster $c$ is assigned to treatment is
\[
q = \Pr(D_c = 1 \mid R_c = 1) = \frac{S_1}{S},
\]
where $S_1$ clusters are assigned to treatment and $S_0=S-S_1$ clusters are assigned to control.

Because both sampling and treatment assignment are carried out without replacement, the inclusion events are dependent, affecting the joint inclusion probabilities of pairs of clusters and units. Specifically, the probability that both clusters $c$ and $c'$ ($c \neq c'$) are sampled is
\[
\Pr(R_c = 1, R_{c'} = 1) = \frac{S(S - 1)}{C(C - 1)}.
\]
Within cluster $c$, the probability that both units $i$ and $j$ ($i \neq j$) are sampled is
\[
\Pr(R_{i \mid c} = 1, R_{j \mid c} = 1) = \frac{n_c(n_c - 1)}{N_c(N_c - 1)}.
\]
For units in different clusters, the probability that unit $i$ in cluster $c$ and unit $j$ in cluster $c'$ are both sampled is
\[
\Pr(R_{i \mid c} = 1, R_{j \mid c'} = 1) = \frac{n_c n_{c'}}{N_c N_{c'}}.
\]
Conditional on both clusters $c$ and $c'$ being sampled, the probability that both clusters are assigned to treatment is
\[
\Pr(D_c = 1, D_{c'} = 1 \mid R_c = 1, R_{c'} = 1) = \frac{S_1(S_1 - 1)}{S(S - 1)},
\]
the probability that both clusters are assigned to control is 
\[
\Pr(D_c = 0, D_{c'} = 0 \mid R_c = 1, R_{c'} = 1) = \frac{S_0(S_0 - 1)}{S(S - 1)},
\]
and the probability that $c$ is assigned to treatment and $c'$ is assigned to control is
\[
\Pr(D_c = 1, D_{c'} = 0 \mid R_c = 1, R_{c'} = 1) = \frac{S_1S_0}{S(S - 1)}.
\]
In this two-stage design, the sampling process first determines which clusters and which units within them are observed, while treatment is assigned independently at the cluster level. The without-replacement nature of both sampling and assignment means that inclusion and treatment probabilities for different clusters or units are not independent. The above probabilities form the basis for the Horvitz-Thompson weighting scheme used in our estimator and for deriving its exact design-based variance.

\section{The Estimator of the ATE and its Exact Variance}\label{sec:estimator}

\subsection{The estimator}

The population parameter of interest is the average treatment effect (ATE), defined as
\begin{align}\label{ATE}
    \tau &= \frac{1}{N} \sum_{c=1}^C \sum_{i=1}^{N_c} \left( Y_{ci}(1) - Y_{ci}(0) \right),
\end{align}
where $Y_{ci}(1)$ and $Y_{ci}(0)$ denote the potential outcomes of unit $i$ in cluster $c$ under treatment and control, respectively. The potential outcomes are treated as fixed in the finite population, while all randomness arises from the sampling and treatment assignment in the experimental design.

We estimate $\tau$ using the Horvitz-Thompson (HT) estimator:
\begin{align*}\label{HT-tau}
    \widehat{\tau} &= \frac{1}{N} \sum_{c=1}^C \left[ \frac{R_c D_c}{p q} \sum_{i=1}^{N_c} \frac{R_{i \mid c} Y_{ci}}{\pi_c} - \frac{R_c (1 - D_c)}{p (1 - q)} \sum_{i=1}^{N_c} \frac{R_{i \mid c} Y_{ci}}{\pi_c} \right].
\end{align*}
This estimator is unbiased, because by construction $\Pr(R_cD_c=1)=pq$, $\Pr(R_c(1-D_c)=1)=p(1-q)$, and $\Pr(R_{i \mid c}=1)=\pi_c$.

\begin{remark}\label{rmk:diff-in-mean}
The difference-in-means estimator is
\[
\widehat{\tau}^{dm}
= \frac{1}{n_1}\sum_{c=1}^C\sum_{i=1}^{N_c} R_c R_{i|c} D_c Y_{ci}
 - \frac{1}{n_0}\sum_{c=1}^C\sum_{i=1}^{N_c} R_c R_{i|c} (1-D_c) Y_{ci},
\]
where
\[
n_1=\sum_{c,i} R_c R_{i|c} D_c, \qquad 
n_0=\sum_{c,i} R_c R_{i|c} (1-D_c)
\]
are the realized numbers of sampled treated and sampled control units. Thus, both denominators $n_1$ and $n_0$ are random.

By contrast, the HT estimator divides by the fixed population size $N$ and places all randomness in the sampling and treatment indicators $(R_c, R_{i|c}, D_c)$ inside the numerator. The randomness in the realized treatment and control sample sizes is already accounted for through these indicators. Because each observed outcome is weighted by inverse inclusion and treatment probabilities, the HT estimator reconstructs the population total directly, without introducing additional randomness through a random denominator. 
\end{remark}

Our goal in this section is to calculate the exact variance of $\widehat{\tau}$ and then propose a feasible and consistent estimator of that variance for inference purposes. To simplify notation and account for heterogeneity in cluster sizes, define the average cluster size
\[
\Bar{N} = \frac{1}{C} \sum_{c=1}^C N_c,
\]
and the average within-cluster sample size
\[
\Bar{n} = \frac{1}{C} \sum_{c=1}^C n_c.
\]
The quantity $N = C \Bar{N}$ represents the total number of units in the population. In contrast, $n = C \Bar{n}$ does not correspond to the actual number of sampled units in the realized data, because only $S$ out of $C$ clusters are drawn in the first stage. Instead, it represents the design-level number of sampling opportunities, which is the total number of units that would be sampled if all clusters were included in the first stage. The actual number of sampled units is random and equals $\sum_{c=1}^CR_cn_c$, which depends on which clusters are sampled, i.e., the realizations of $R_c$. This distinction is important because $\Bar{n}$ and $n = C \Bar{n}$ are fixed by design, not random. They are specified in advance to characterize the sampling probabilities and to simplify expressions of the estimator's variance. In many applications, especially in development economics, researchers know all cluster sizes $N_c$ from census or administrative data, so $\Bar{N}$ is treated as a known quantity. Also, in practice, the within-cluster sample sizes $n_c$ are almost always determined before the first-stage cluster sampling is carried out. Typically, one of the following conventions is adopted at the design stage: fixed number of sampled units per cluster $n_c=\Bar{n}$; fixed sampling proportion $n_c = \pi N_c$, where $\pi$ is a constant sampling rate applied across clusters; predetermined list of sample sizes, where a set of values $n_c$ are chosen in advance based on available information about cluster sizes, survey logistics, or budget constraints. Because these choices are made ex-ante, before any cluster is actually sampled, it is natural and appropriate to treat $n_c$, $\Bar{n}$, and $n = C\Bar{n}$ as known, nonrandom design parameters.

To derive the exact variance of the ATE estimator in a way that separates contributions from the two sampling stages, we introduce scaled version of the potential outcomes. For each unit $i$ in cluster $c$ and treatment status $d \in \{0,1\}$, define the scaled potential outcomes
\begin{align*}
   \tilY_{ci}(d) &= \frac{Y_{ci}(d)}{\Bar{N}}.
\end{align*}
The scaled cluster total is then
\begin{align*}
   \tilY_c(d) &= \sum_{i=1}^{N_c}\tilY_{ci}(d) = \frac{1}{\Bar{N}} \sum_{i=1}^{N_c} Y_{ci}(d),
\end{align*}
and the scaled cluster mean is
\begin{align*}
\bar{\tilY}_c(d) = \frac{1}{N_c}\tilY_c(d).
\end{align*}
The  mean of scaled cluster totals equals the mean outcome over all units.
\begin{align*}
\bar{{Y}}(d) = \frac{1}{C} \sum_{c=1}^C\tilY_c(d)=\frac{1}{N}\sum_{c=1}^C\sum_{i=1}^{N_c}Y_{ci}(d).
\end{align*}

With these definitions, the ATE $\tau$ can be expressed as an average of cluster-level treatment effects:
\begin{align*}
    \tau &= \frac{1}{C} \sum_{c=1}^C \left(\tilY_c(1) -\tilY_c(0) \right)=\frac{1}{C}\sum_{c=1}^C \tau_c,
\end{align*}
where $\tau_c =\tilY_c(1) -\tilY_c(0)$ is the treatment effect for cluster $c$.

We first consider an infeasible benchmark estimator. If all units within sampled clusters are observed, then the population  cluster totals $\tilY_{c}(1)$ and $\tilY_{c}(0)$ would be known. In that case, the infeasible Horvitz-Thompson estimator using the true cluster totals is
\begin{align}\label{tau-bar}
    \bar{\tau} &= \frac{1}{C} \sum_{c=1}^C \left[ \frac{R_c D_c\tilY_c}{p q} - \frac{R_c (1 - D_c)\tilY_c}{p (1 - q)} \right],
\end{align}
where $\tilY_c =D_c\tilY_c(1)+ (1-D_c)\tilY_c(0)$. In this expression, randomness arises only from the first-stage cluster sampling and treatment assignment of clusters, since no within-cluster sampling occurs.

In practice, the true cluster totals $\widetilde{Y}_c(d)$ are not observed because only a subset of units within each sampled cluster is surveyed. To obtain unbiased estimates of cluster-level totals, we use a within-cluster Horvitz-Thompson estimator:
\begin{align}\label{ht-sum}
    \widehat{\tilY}_c &= \sum_{i=1}^{N_c} \frac{R_{i \mid c}\tilY_{ci}}{\pi_c} = \frac{1}{\Bar{N}} \sum_{i=1}^{N_c} \frac{R_{i \mid c} Y_{ci}}{\pi_c}.
\end{align}
Replacing each unobserved cluster total $\tilY_c$ by its within-cluster estimate $\widehat{\tilY}_c$ yields an equivalent, cluster-level representation of the overall ATE estimator:
\begin{align}\label{HT-alt}
    \widehat{\tau} &= \frac{1}{C} \sum_{c=1}^C \left[ \frac{R_c D_c\widehat{\tilY}_c }{p q} - \frac{R_c (1 - D_c)\widehat{\tilY}_c} {p (1 - q)} \right].
\end{align}
This form makes explicit the two distinct sources of randomness in the estimator: between-cluster randomness, which is due to sampling and treatment assignment of clusters, and within-cluster randomness, which is due to the second-stage sampling of units used to construct $\widehat{\tilY}_c$. 

By expressing the estimator in this way, we can cleanly separate the contributions of each stage to the total design-based variance. The next subsection derives the exact variance of both $\bar{\tau}$ (the infeasible estimator) and $\widehat{\tau}$ (the feasible estimator with unit sampling).


\subsection{The exact variance}
We begin by showing that both $\bar\tau$ and $\widehat{\tau}$ are unbiased estimators of the average treatment effect, and then derive their exact variances. The results are summarized in the following proposition.

\begin{proposition}\label{prop:exact_var}
    Suppose the first stage sampling probability $p=\frac{S}{C}\in (0,1]$, and the cluster-level treatment assignment probability is $q = \frac{S_1}{S}\in (0,1)$. Then both estimators $\bar{\tau}$ and $\widehat{\tau}$ are unbiased for the finite-population average treatment effect $\tau$:
\begin{align*}
    \E\left[\bar{\tau}\right]=\tau, \hspace{0.1cm} \E\left[\widehat{\tau}\right] = \tau.
\end{align*}
    The exact variance of the infeasible estimator $\bar{\tau}$ (which uses true cluster totals) is 
    \begin{align}\label{eq:exact_var_1st}
        \Var\left(\bar{\tau}\right) &= \frac{1}{C}\left\{\frac{\frac{1}{C-1}\sum_{c=1}^C\left(\tilY_c(1)-\bar{{Y}}(1)\right)^2}{pq}+\frac{\frac{1}{C-1}\sum_{c=1}^C\left(\tilY_c(0)-\bar{{Y}}(0)\right)^2}{p(1-q)}-\frac{1}{C-1}\sum_{c=1}^C\left({\tau}_c-{\tau}\right)^2\right\}.
    \end{align}
When within-cluster unit sampling is also performed, let $\pi_c=\frac{n_c}{N_c}\in (0,1]$ denote the second-stage sampling probability and $\tilde{\pi}_c = \frac{n_c-1}{N_c-1}$. The exact variance of the feasible estimator $\widehat{\tau}$ is:
    \begin{align}\label{eq:exact_var_2nd}
        \Var\left(\widehat{\tau}\right) &= \frac{1}{C}\left\{\frac{\frac{1}{C-1}\sum_{c=1}^C\left(\tilY_c(1)-\bar{{Y}}(1)\right)^2}{pq}+\frac{\frac{1}{C-1}\sum_{c=1}^C\left(\tilY_c(0)-\bar{{Y}}(0)\right)^2}{p(1-q)}-\frac{1}{C-1}\sum_{c=1}^C\left({\tau}_c-{\tau}\right)^2\right\} \\ \nonumber
        & + \frac{1}{C^2}\left\{\sum_{c=1}^C\frac{(1-\pi_c)(1-\tilde{\pi}_c)}{\pi_c(\pi_c-\tilde{\pi}_c)}\left(\frac{\frac{1}{N_c-1}\sum_{i=1}^{N_c}(\tilY_{ci}(1)-\bar{\tilY}_c(1))^2}{pq} + \frac{\frac{1}{N_c-1}\sum_{i=1}^{N_c}(\tilY_{ci}(0)-\bar{\tilY}_c(0))^2}{p(1-q)}\right)\right\}.
    \end{align}
\end{proposition}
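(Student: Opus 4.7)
The plan is to prove the three claims—unbiasedness of $\bar{\tau}$ and $\widehat{\tau}$ and the two variance formulas—in sequence, using iterated expectations to separate first-stage randomness (cluster sampling and assignment) from second-stage within-cluster sampling. For unbiasedness, I would verify $\E[\bar{\tau}] = \tau$ directly from $\E[R_c D_c] = pq$ and $\E[R_c(1-D_c)] = p(1-q)$, together with $\tilY_c = D_c\tilY_c(1) + (1-D_c)\tilY_c(0)$ and $D_c(1-D_c)=0$. For $\widehat{\tau}$, I would condition on the first-stage $\sigma$-field $\mathcal{F}_1 = \sigma(\{R_c, D_c\}_c)$ and use $\E[\widehat{\tilY}_c \mid R_c=1] = \tilY_c$ to obtain $\E[\widehat{\tau} \mid \mathcal{F}_1] = \bar{\tau}$; unbiasedness of $\widehat{\tau}$ then follows by iteration.

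For $\Var(\bar{\tau})$, I would expand the double sum
\[
\Var(\bar{\tau}) = \frac{1}{C^2}\sum_{c,c'} \Cov\!\left(\tfrac{R_c D_c \tilY_c(1)}{pq} - \tfrac{R_c(1-D_c)\tilY_c(0)}{p(1-q)},\ \tfrac{R_{c'} D_{c'}\tilY_{c'}(1)}{pq} - \tfrac{R_{c'}(1-D_{c'})\tilY_{c'}(0)}{p(1-q)}\right)
\]
and evaluate the four types of pairings ($c=c'$ vs.\ $c\neq c'$; treatment-treatment, control-control, and the mixed treatment-control cross-terms) using the pairwise joint probabilities already catalogued in Section \ref{subsec:set}. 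Collecting terms and using finite-population identities to convert raw second moments into centered sums of squares with the $1/(C-1)$ normalization will produce the three stated variance components $S_1^2/(pq)$, $S_0^2/(p(1-q))$, and $-S_\tau^2$.

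For $\Var(\widehat{\tau})$, I would invoke the law of total variance conditional on $\mathcal{F}_1$:
\[
\Var(\widehat{\tau}) = \Var\!\bigl(\E[\widehat{\tau}\mid \mathcal{F}_1]\bigr) + \E\!\bigl[\Var(\widehat{\tau}\mid \mathcal{F}_1)\bigr].
\]
The first term equals $\Var(\bar{\tau})$ by the conditional unbiasedness noted above. For the second, conditional on $\mathcal{F}_1$ only the independent within-cluster samplings remain, so the conditional variance is a weighted sum of $\Var(\widehat{\tilY}_c \mid R_c=1)$ over sampled clusters. The HT within-cluster variance simplifies via the algebraic identity $(N_c - 1)(\pi_c - \tilde{\pi}_c) = 1 - \pi_c$ to $\frac{1 - \tilde{\pi}_c}{\pi_c}\sum_i (\tilY_{ci}(d) - \bar{\tilY}_c(d))^2$. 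Taking the outer expectation over $(R_c, D_c)$ introduces the $pq$ and $p(1-q)$ denominators, and rewriting the coefficient as $\frac{(1-\pi_c)(1-\tilde{\pi}_c)}{\pi_c(\pi_c - \tilde{\pi}_c)} \cdot \frac{1}{N_c - 1}$ reproduces exactly the second block of \eqref{eq:exact_var_2nd}.

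The main obstacle is the bookkeeping in the $\Var(\bar{\tau})$ step: the mixed treatment-control cross-terms must be handled at $c \neq c'$ (where $D_c(1-D_{c'})$ does not vanish and requires the pair probability $S_1 S_0/[S(S-1)]$), and one must carefully verify that all raw-moment sums collapse to the centered, $(C-1)$-normalized finite-population variances claimed, including the negative $\tau_c - \tau$ term that arises from combining treated and control cross-covariances. The within-cluster step is conceptually easier once the HT variance identity is in hand, because independence of $\{R_{i\mid c}\}$ across sampled clusters conditional on $\mathcal{F}_1$ eliminates all cross-cluster terms.
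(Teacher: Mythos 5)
Your proposal is correct and follows essentially the same route as the paper: unbiasedness via the first-stage inclusion/assignment probabilities and iterated expectations, $\Var(\bar{\tau})$ by expanding the pairwise covariances using the without-replacement joint probabilities, and the second-stage contribution via the within-cluster Horvitz--Thompson variance with vanishing cross-cluster and cross-stage terms. Your use of the law of total variance conditional on $\mathcal{F}_1$ is just a repackaging of the paper's expansion $\E[(\widehat{\tau}-\bar{\tau})^2]+2\E[(\widehat{\tau}-\bar{\tau})(\bar{\tau}-\tau)]+\E[(\bar{\tau}-\tau)^2]$ with the cross term shown to be zero, and your algebraic identity $(N_c-1)(\pi_c-\tilde{\pi}_c)=1-\pi_c$ correctly reproduces the stated coefficient.
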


When only the first-stage sampling of clusters and the treatment assignment are random, the variance of $\bar{\tau}$ in (\ref{eq:exact_var_1st}) can be understood as having three additive components. The first term is between-cluster variation under treatment. $\frac{1}{C - 1} \sum_{c=1}^C \left(\tilY_c(1) - \bar{\tilY}(1) \right)^2$ measures the dispersion of the scaled cluster outcomes under treatment. It captures how heterogeneous treated clusters would be if all could be observed. The second term is between-cluster variation under control. $\frac{1}{C - 1} \sum_{c=1}^C \left(\tilY_c(0) - \bar{\tilY}(0) \right)^2$ is the analogous quantity for control clusters. The third term is the adjustment for the covariance between treatment and control potential outcomes. $\frac{1}{C - 1} \sum_{c=1}^C (\tau_c-\tau)^2$ equals $\frac{1}{C - 1} \sum_{c=1}^C \left(\tilY_c(1) - \bar{\tilY}(1) \right) \left(\tilY_c(0) - \bar{\tilY}(0) \right)$, and removes the double-counting of variability that arises when the two potential outcomes are positively correlated within clusters. Intuitively, when clusters that perform well under control also perform well under treatment, this term reduces the overall variance. The factors $1/(pq)$ and $1/(p(1-q))$ reflect the design probabilities: smaller sampling or treatment probabilities increase variance because fewer clusters contribute to information to each term. 

If there is no sampling, i.e., $S=C$ and all clusters are observed, our estimator $\bar{\tau}$ coincides with the difference-in-means estimator studied by \cite{su2021model}. The corresponding variance expression for $\bar{\tau}$, denoted as $\Var\left(\bar{\tau}\right)$, is identical to theirs, except that our expression includes the exact finite population correction factor of $\frac{1}{C-1}$ rather than $\frac{1}{C}$. \cite{abadie2023should} studied a difference-in-means estimator but under Bernoulli randomization, where each cluster is independently assigned to treatment or control. They approximate the variance by conditioning on the realized number of treated and control units. This yields a variance expression that is algebraically similar to ours, but omits the finite population correction, which aligns with the inherent nature of Bernoulli assignment. A similar distinction arises when only a subset of clusters is sampled in the first stage ($S<C$). In this case, the effective weights in the first two variance components of $\Var\left(\bar{\tau}\right)$ depend on both the cluster sampling and treatment probabilities-specifically, the products $pq$ and $p(1-q)$. These terms capture how the uncertainty from sampling and assignment jointly contributes to the overall design-based variance. 

When second-stage sampling is introduced, the true cluster totals $\tilY_c$ are no longer directly observed; instead, they are estimated from within-cluster samples using $\widehat{\tilY}_c$. Consequently, the variance of $\widehat{\tau}$ contains additional components that reflect the extra randomness introduced by the within-cluster sampling process ($\pi_c<1$). Intuitively, because each cluster's total outcome is now estimated from a subsample of its units, the precision of $\widehat{\tilY}_c$ depends on the within-cluster sampling fraction $\pi_c = n_c/N_c$, and the heterogeneity of outcomes among units in that cluster. These features appear in the additional variance terms involving the within-cluster variance of scaled outcomes, $\frac{1}{N_c - 1} \sum_{i=1}^{N_c} \left(\tilY_{ci}(d) - \bar{\tilY}_c(d) \right)^2$, which measure how dispersed individual outcomes are around the cluster mean. 

If all sampled clusters share the same within-cluster sampling i.e. $\pi_c=\pi$ for all $c$-then the adjustment term simplifies to a constant scaling factor. In that case, our variance expressions $\Var(\widehat{\tau})$ coincides with the one derived in \cite{abadie2023should}, except that our formulation explicitly includes finite population correction factors arising from considering sampling and treatment assignment without replacement.

\section{Asymptotic Analysis}\label{sec:asym}
We study the asymptotic properties of the Horvitz-Thompson estimator under a sequence of finite, clustered populations. The number of clusters $C$ grows to infinity, while the number of units per cluster may either remain fixed or grow with $C$. This is a finite population framework in which cluster counts, rather than individual counts, drive asymptotics. 

Our asymptotic analysis builds on two complementary strands of the literature. First, the survey-sampling strand, exemplified by \cite{ohlsson1989asymptotic} and \cite{chauvet2020inference}, establishes central limit theorems for two-stage sampling in finite populations. Second, \cite{li2017general} develops a general framework for finite-population asymptotics in design-based causal inference, identifying conditions under which estimators remain asymptotically normal as the population size grows. Related work in cluster-randomized experiments includes \cite{middleton2015unbiased} and \cite{su2021model}, who analyze asymptotics under cluster-level treatment assignment, and \cite{abadie2023should}, who develop a design‑based framework that explicitly incorporates both clustered sampling and clustered (possibly partial) assignment.

Building on these foundations, we establish asymptotic normality for our two‑stage Horvitz–Thompson estimator. To do so, we impose a set of regularity conditions on (i) moments of potential outcomes, (ii) sampling and treatment probabilities, and (iii) the degree of cluster‑size heterogeneity. These assumptions are stated next.
\begin{assumption}\label{ass:moment}
    $\frac{1}{N}\sum_{c=1}^C\sum_{i=1}^{N_c}Y_{ci}(d)^4=O(1),d=0,1$.
\end{assumption}
Assumption \ref{ass:moment} imposes a finite fourth-moment bound on unit-level potential outcomes. It ensures that extremely large or heavy-tailed outcomes do not dominate the behavior of the estimator. As in\cite{su2021model}, imposing the moment condition at the unit level ensures that the scaled cluster totals $\tilY_c(d)=\bar{N}^{-1}\sum_{i=1}^{N_c}Y_{ci}(d)$ are well-behaved. By contrast, \cite{abadie2023should} assume a stronger uniform boundedness condition on the potential outcomes—stronger than a finite fourth-moment requirement—thereby excluding heavy-tailed distributions. Our weaker condition allows for moderate heterogeneity across units and clusters, provided the overall moment bound holds.

We next state the assumptions on the sampling probabilities and treatment assignment.
\begin{assumption}\label{ass:cluster_sampling}
    The probability of cluster sampling $p=\frac{S}{C}\in(0,1]$ satisfies $p^{-1}=O\left(C^{\beta}\right)$, with $\beta\in [0,1/2)$.
\end{assumption}
\begin{assumption}\label{ass:treat_prob}
The probability of treatment $q=\frac{S_1}{S}\in(0,1)$.
\end{assumption}
\begin{assumption}\label{ass:sample_units}
The second-stage sampling probability is not $0$: $n_c>0$, $\forall c$.
\end{assumption}
Assumption \ref{ass:cluster_sampling} allows the first-stage sampling rate $p$ to decline with $C$, but not too quickly. The restriction $p^{-1}=O(C^\beta)$ with $\beta<1/2$ ensures that the number of sampled clusters $S=pC$ still grows sufficiently fast to support a central limit theorem. Assumption \ref{ass:treat_prob} is the standard overlap condition, ruling out near‑degenerate treatment shares among sampled clusters. Assumption \ref{ass:sample_units} requires the number of units sampled in the sampled clusters to be bounded away from $0$. This is a minimal requirement preventing empty clusters in the observed data. In practice, it is usually automatically satisfied since the within-cluster sample sizes $n_c$ are fixed in advance at the design stage.

We also need to control cluster-size heterogeneity, which plays a critical role in establishing asymptotic normality. Let $\omega_c = N_c/\Bar{N}$, where $\Bar{N}$ is the average cluster size, and define $\omega = \max_c \omega_c$. Here, $\omega_c$ is cluster $c$'s relative size compared to the average, while $\omega$ captures the overall degree of heterogeneity in cluster size. If all clusters are of equal size, i.e., $N_c=\bar{N}$, then $\omega=1$. Large values of $\omega$ indicate that a few clusters are disproportionately large, which can dominate the sampling variation and invalidate standard central limit approximations if left uncontrolled.

\begin{assumption}\label{ass:size}
    $\omega=o\left(C^{\frac{1}{3}\left(1-2\beta\right)}\right)$.
\end{assumption}

In comparison, \cite{abadie2023should} assumed that $\frac{\max_c N_c}{\min_c N_c}<\infty$, which implies that cluster sizes are uniformly bounded relative to each other. Our condition is weaker: we permit some heterogeneity and even allow the largest cluster size to grow with $C$, provided it does so slowly enough. When all clusters are sampled ($p=1$), our rate condition simplifies to $\omega=o\left(C^{1/3}\right)$, which coincides with the assumption in \cite{su2021model}. Because our setting includes cluster sampling, we must simultaneously control the sampling probability $p$ (Assumption \ref{ass:cluster_sampling}) and the degree of size heterogeneity $\omega$ (Assumption \ref{ass:size}). Because our setting includes cluster sampling, we must simultaneously control the sampling rate: $p^{-1}\omega^{3/2}=o\left(C^{1/2}\right)$. Intuitively, the condition means that as the number of clusters grows, both the sampling rate and the distribution of cluster sizes must be balanced so that the total information is not concentrated in a small number of large clusters.

\begin{lemma}\label{lm:clt_s1}
     Under Assumptions \ref{ass:moment}, \ref{ass:cluster_sampling}, \ref{ass:treat_prob}, and \ref{ass:size}, if $C\Var(\bar{\tau})\nrightarrow 0$ as $C\rightarrow\infty$, then $(\bar{\tau}-\tau)/\text{se}(\bar{\tau})\rightarrow^d \mathcal{N}(0,1)$, where $\text{se}(\bar{\tau})=\sqrt{\Var(\bar{\tau})}$.
\end{lemma}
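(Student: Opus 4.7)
The plan is to recast $C(\bar{\tau}-\tau)$ as a linear statistic over a uniform random permutation of $\{1,\dots,C\}$ and apply Hoeffding's combinatorial central limit theorem. The two-stage randomization in our design---a uniform random sample of $S$ clusters, followed by a uniform random split of that sample into $S_1$ treated and $S_0$ control---is distributionally equivalent to drawing a uniform random permutation $\pi$ of $\{1,\dots,C\}$ and declaring $\pi(1),\dots,\pi(S_1)$ treated, $\pi(S_1+1),\dots,\pi(S)$ control, and the remainder unsampled. Defining the $C\times C$ array
\[
c_{kj}=\frac{\tilY_j(1)}{pq}\ (1\le k\le S_1),\qquad c_{kj}=-\frac{\tilY_j(0)}{p(1-q)}\ (S_1<k\le S),\qquad c_{kj}=0\ (k>S),
\]
we obtain the representation $W:=C\bar{\tau}=\sum_{k=1}^C c_{k,\pi(k)}$, a Hoeffding permutation sum.

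Let $d_{kj}=c_{kj}-\bar{c}_{k\cdot}-\bar{c}_{\cdot j}+\bar{c}_{\cdot\cdot}$ denote the doubly-centered entries. A direct substitution (using $\bar{c}_{\cdot j}=\tau_j$ and $\bar{c}_{\cdot\cdot}=\tau$) confirms that the permutation variance formula $(C-1)^{-1}\sum_{k,j}d_{kj}^2$ equals $C^2\Var(\bar{\tau})$ as displayed in (\ref{eq:exact_var_1st}). The combinatorial CLT then delivers $(\bar{\tau}-\tau)/\text{se}(\bar{\tau})\rightarrow^d\mathcal{N}(0,1)$ provided the Lindeberg-type condition $\max_{k,j} d_{kj}^2 / \sum_{k,j} d_{kj}^2 \to 0$ is verified.

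The main technical step is this verification. By Jensen's inequality, $\tilY_c(d)^4\le\omega_c^3\bar{N}^{-1}\sum_i Y_{ci}(d)^4$; summing over $c$ and invoking Assumption~\ref{ass:moment} yields $\sum_c\tilY_c(d)^4=O(\omega^3 C)$, whence $\max_c\tilY_c(d)^2=O(\omega^{3/2}C^{1/2})$. Since each $|d_{kj}|$ is dominated by a constant multiple of $(|\tilY_j(1)|+|\tilY_j(0)|)/p$ plus lower-order averages (using Assumptions~\ref{ass:cluster_sampling} and \ref{ass:treat_prob}), we obtain $\max_{k,j}d_{kj}^2=O(\omega^{3/2}C^{1/2+2\beta})$. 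The denominator satisfies $\sum_{k,j}d_{kj}^2=(C-1)C^2\Var(\bar{\tau})\ge cC^2$ by the non-degeneracy hypothesis $C\Var(\bar{\tau})\nrightarrow 0$. Hence the ratio is $O(\omega^{3/2}/C^{3/2-2\beta})$, which is $o(1)$ under Assumption~\ref{ass:size}: $\omega=o(C^{(1-2\beta)/3})$ implies $\omega^{3/2}=o(C^{(1-2\beta)/2})$, and $(1-2\beta)/2<3/2-2\beta$ for every $\beta<1/2$.

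The hard part is the joint control of heterogeneous cluster sizes $\omega$, a potentially vanishing sampling rate $p=S/C$, and only a fourth-moment bound on individual outcomes. Assumption~\ref{ass:size} is calibrated precisely so that the inverse-probability-weighted maximum cluster contribution $\max_j|\tilY_j(d)|/p$ stays asymptotically negligible relative to the standard deviation of $W$; without such a heterogeneity rate, a single atypically large cluster could dominate the permutation sum and invalidate the combinatorial CLT.
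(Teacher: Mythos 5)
Your overall route---recasting the design as a uniform permutation of $C$ clusters into three arms of sizes $S_1$, $S_0$, $C-S$ and applying a combinatorial CLT---is essentially the same as the paper's, which delegates to the finite-population CLT of Li--Ding (Theorem~1) for exactly this multi-arm permutation structure; your moment computation $\max_c\widetilde{Y}_c(d)^2=O(\omega^{3/2}C^{1/2})$ is identical to the one in the paper's proof. However, there is a genuine error in the Lindeberg step. The sufficient condition for Hoeffding's combinatorial CLT is
\[
\frac{\max_{k,j}d_{kj}^2}{\Var(W)}=\frac{(C-1)\max_{k,j}d_{kj}^2}{\sum_{k,j}d_{kj}^2}\;\longrightarrow\;0,
\]
since $\Var(W)=\tfrac{1}{C-1}\sum_{k,j}d_{kj}^2$, not $\max_{k,j}d_{kj}^2/\sum_{k,j}d_{kj}^2\to 0$ as you state. (Your condition is far too weak: an array with $\log C$ equal nonzero doubly-centered entries satisfies it while $W$ is plainly non-normal.) Multiplying your ratio by the missing factor $C-1$ gives $O\bigl(\omega^{3/2}C^{2\beta-1/2}\bigr)=o(C^{\beta})$ under Assumption~\ref{ass:size}, which does \emph{not} vanish when $\beta>0$---precisely the regime of shrinking cluster-sampling rates that Assumption~\ref{ass:cluster_sampling} is designed to admit. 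Your argument as written is therefore only valid for $\beta=0$.

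The repair is to stop bounding the denominator by the crude $\sum_{k,j}d_{kj}^2\ge cC^2$ (i.e.\ $C\Var(\bar\tau)\ge c$) and instead use that the inverse-probability weights inflate the variance as well as the maximal entry: from \eqref{eq:exact_var_1st}, when $p\to 0$ one has $C\Var(\bar\tau)\gtrsim \sigma^2(1)/(pq)+\sigma^2(0)/(p(1-q))$, so one of the two factors of $1/p$ in $\max_{k,j}d_{kj}^2\lesssim p^{-2}\max_c\widetilde{Y}_c(d)^2$ cancels and the corrected Lindeberg ratio reduces to $\max_c(\widetilde{Y}_c(d)-\bar Y(d))^2/\bigl(S_d\,\sigma^2(d)\bigr)$. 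With $\sigma^2(d)$ bounded away from zero (as in the paper's sufficient conditions for non-degeneracy), this is exactly the quantity $S^{-1}\max_c\widetilde{Y}_c(d)^2=O\bigl(C^{\beta-1/2}\omega^{3/2}\bigr)=o(1)$ that the paper verifies. You should also note that the bare hypothesis $C\Var(\bar\tau)\nrightarrow 0$ alone does not preclude $\sigma^2(d)\to 0$ at rate $p$, so, like the paper, you need to invoke the stated sufficient conditions for non-degeneracy (or an equivalent assumption on the arm-specific dispersions) to close this step.
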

The condition $C\Var(\bar{\tau})\nrightarrow 0$ ensures a non-degenerate design variance, which means that the variance arising from the first-stage sampling and treatment assignment does not vanish as the number of clusters grows. This assumption is the finite-population analogue of requiring that the design provides enough independent information for a central limit theorem to apply, much as in classical two-stage survey-sampling theory.

Define the across-cluster variance and covariance components:
\begin{align*}
    &\sigma^2(1)=\frac{1}{C-1}\sum_{c=1}^C\left(\tilY_c(1)-\bar{Y}(1)\right)^2, \quad \sigma^2(0)=\frac{1}{C-1}\sum_{c=1}^C\left(\tilY_c(0)-\bar{Y}(0)\right)^2\\
    &\sigma^2(\tau)=\frac{1}{C-1}\sum_{c=1}^C\left(\tau_c-\tau\right)^2, \quad \sigma_{10} =\frac{1}{C-1}\sum_{c=1}^C\left(\tilY_c(1)-\bar{Y}(1)\right)\left(\tilY_c(0)-\bar{Y}(0)\right).
\end{align*}

A sufficient condition for $C\Var(\bar{\tau})\nrightarrow 0$ is either: (i) $\sigma^2(1)\nrightarrow 0$ and $(1-p)+\left|\lambda-\frac{1-q}{q}\right|\neq 0$, where $\lambda = \frac{\sigma(0)}{\sigma(1)}$; or (ii) $\sigma^2(0)\nrightarrow 0$ and $(1-p)+\left|\lambda-\frac{q}{1-q}\right|\neq 0$, where $\lambda = \frac{\sigma(1)}{\sigma(0)}$. These conditions ensure that the design variance does not collapse wither because treatment shares become perfectly balanced ($q=1/2$) or because sampling covers all clusters ($p=1$) while the treated and control potential outcomes have co-movements. 

To see why the above suffices, suppose $\sigma^2(1)\nrightarrow 0$ and define $\lambda = \sigma(0)/\sigma(1)$. Using $\sigma^2(\tau) = \sigma^2(1)+\sigma^2(0)-2\sigma_{10}$, we could have
\begin{align*}
    C\Var(\bar{\tau}) &= \frac{1}{pq}\sigma^2(1) + \frac{1}{p(1-q)}\sigma^2(0) - \sigma^2(\tau) \\
    &= \frac{1}{pq}\sigma^2(1) + \frac{1}{p(1-q)}\sigma^2(0) - \left(\sigma^2(1)+\sigma^2(0)-2\sigma_{10}\right) \\
    &\geq \left(\frac{1}{pq}-1\right)\sigma^2(1) + \left(\frac{1}{p(1-q)}-1\right)\lambda^2\sigma^2(1) - 2\lambda\sigma^2(1) \\
    &= \left[\left(\frac{1}{p(1-q)}-1\right)\lambda^2 - 2\lambda + \left(\frac{1}{pq}-1\right)\right]\sigma^2(1)
\end{align*}
where the inequality uses $\sigma_{10}\geq -\sigma(1)\sigma(0)=-\lambda \sigma^2(1)$ by Cauchy-Schwarz. The expression in the bracket, $\left(\frac{1}{p(1-q)}-1\right)\lambda^2 - 2\lambda + \left(\frac{1}{pq}-1\right)$, is equal to $\left(\sqrt{\frac{q}{1-q}}\lambda-\sqrt{\frac{1-q}{q}}\right)^2$ if $p=1$; and it is strictly positive if $p<1$.

The proof of Lemma \ref{lm:clt_s1} follows by applying the finite-population CLT in \cite{li2017general} to the cluster-level Hotvitz-Thompson statistics. The argument parallels Theorem 1 of \cite{su2021model}, who derive the same result for cluster-randomized experiment without a cluster sampling stage. 

We now turn to the feasible estimator $\widehat{\tau}$. Relative to $\Bar{\tau}$, the estimator $\widehat{\tau}$ includes second-stage unit sampling, which introduces additional randomness and within-cluster dependence. Proving asymptotic normality therefore requires two ingredients: (i) the CLT for the first‑stage component $\Bar{\tau}$ (Lemma \ref{lm:clt_s1}); (ii) a Lyapunov-type condition ensuring that the added variability from estimating cluster totals using within-cluster subsamples is asymptotically negligible—formally analogous to conditions in two-stage survey-sampling CLTs. Together, these will yield asymptotic normality of $\widehat{\tau}$ as $C\rightarrow\infty$.

\begin{theorem}\label{thm:clt_s2}
     Under Assumption \ref{ass:moment}, \ref{ass:cluster_sampling}, \ref{ass:treat_prob}, \ref{ass:sample_units}, and \ref{ass:size}, and conditions specified in Lemma \ref{lm:clt_s1}, if $C\Var(\bar{\tau})\nrightarrow 0$ as $C\rightarrow \infty$, then $(\widehat{\tau} -\tau)/ \text{se}(\widehat{\tau})\rightarrow^d \mathcal{N}(0,1)$, where $\text{se}(\widehat{\tau})=\sqrt{\Var(\widehat{\tau})}$.
\end{theorem}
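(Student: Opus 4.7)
The plan is to decompose $\widehat\tau - \tau$ into a first‑stage piece governed by Lemma \ref{lm:clt_s1} and a second‑stage correction, then combine a conditional CLT for the correction with the unconditional CLT for the first‑stage piece via a characteristic‑function argument familiar from two‑stage survey sampling \citep{ohlsson1989asymptotic, chauvet2020inference}.

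First, write
\[
\widehat\tau - \tau = \underbrace{(\bar\tau - \tau)}_{A_C} \;+\; \underbrace{(\widehat\tau - \bar\tau)}_{B_C},
\]
with $\bar\tau$ as in (\ref{tau-bar}) and, using (\ref{ht-sum}),
\[
B_C = \frac{1}{C}\sum_{c=1}^C R_c\left[\frac{D_c}{pq}\bigl(\widehat{\tilY}_c(1) - \tilY_c(1)\bigr) - \frac{1-D_c}{p(1-q)}\bigl(\widehat{\tilY}_c(0) - \tilY_c(0)\bigr)\right],
\]
where $\widehat{\tilY}_c(d) = \bar N^{-1}\sum_i R_{i\mid c}Y_{ci}(d)/\pi_c$. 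Because within‑cluster sampling is independent across clusters and independent of the first‑stage variables, $\E[B_C \mid \{R_c\},\{D_c\}] = 0$, so $\Cov(A_C, B_C)=0$ and $\Var(\widehat\tau) = \Var(A_C) + \Var(B_C)$, reproducing the two summands of (\ref{eq:exact_var_2nd}).

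Second, I would establish a conditional CLT for $B_C$. Conditional on $\{R_c, D_c\}$, $B_C$ is a sum of $S$ independent mean‑zero contributions (one per sampled cluster), so I would verify a Lyapunov $(2+\delta)$ condition with $\delta=2$ by bounding the fourth moment of each within‑cluster HT error $\widehat{\tilY}_c(d)-\tilY_c(d)$ via standard finite‑population fourth‑moment inequalities for simple random sampling without replacement, then aggregate using the cluster‑size heterogeneity bound of Assumption \ref{ass:size} and the first‑stage rate control of Assumption \ref{ass:cluster_sampling}. This yields $B_C/\sqrt{\Var(B_C\mid \{R_c\},\{D_c\})} \rightarrow^d \mathcal{N}(0,1)$ conditionally on the first stage, with $\Var(B_C\mid \{R_c\},\{D_c\})/\Var(B_C)\rightarrow^p 1$. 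To stitch the two pieces together, apply the tower property to the characteristic function:
\[
\E\Bigl[e^{it(\widehat\tau - \tau)/\text{se}(\widehat\tau)}\Bigr] = \E\Bigl[e^{itA_C/\text{se}(\widehat\tau)} \cdot \E\bigl[e^{itB_C/\text{se}(\widehat\tau)} \,\big|\, \{R_c\},\{D_c\}\bigr]\Bigr].
\]
The inner conditional expectation converges in probability to $\exp\bigl(-t^2\Var(B_C)/[2\,\text{se}(\widehat\tau)^2]\bigr)$, and Lemma \ref{lm:clt_s1} applied to $A_C/\text{se}(\bar\tau)$, together with the variance identity $\text{se}(\widehat\tau)^2 = \text{se}(\bar\tau)^2 + \Var(B_C)$, handles the outer factor. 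Bounded convergence yields $\E\bigl[e^{it(\widehat\tau - \tau)/\text{se}(\widehat\tau)}\bigr] \to e^{-t^2/2}$, i.e., $(\widehat\tau-\tau)/\text{se}(\widehat\tau) \rightarrow^d \mathcal{N}(0,1)$.

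The main obstacle will be verifying the Lyapunov condition uniformly across first‑stage realizations. The relevant ratio involves $\sum_c R_c\,\E[(\widehat{\tilY}_c(d)-\tilY_c(d))^4\mid R_c=1]$ in the numerator and the square of $\sum_c R_c\,\Var(\widehat{\tilY}_c(d)\mid R_c=1)$ in the denominator; controlling it requires that no single cluster dominate either sum. Assumption \ref{ass:moment} (finite fourth moments at the unit level) combined with the compound rate $p^{-1}\omega^{3/2}=o(C^{1/2})$ implied by Assumptions \ref{ass:cluster_sampling} and \ref{ass:size} is precisely what suffices, in the same spirit as the rate conditions in \cite{ohlsson1989asymptotic} and \cite{chauvet2020inference}. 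The non‑degeneracy condition $C\Var(\bar\tau)\nrightarrow 0$ keeps $\text{se}(\widehat\tau)^2$ at the parametric $C^{-1}$ rate, which is what makes the Lyapunov ratio vanish.
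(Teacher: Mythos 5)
Your proposal is correct and follows essentially the same route as the paper: the paper likewise decomposes $\widehat\tau-\tau=(\bar\tau-\tau)+(\widehat\tau-\bar\tau)$ and invokes Theorem 2.1 of \cite{ohlsson1989asymptotic}, verifying its conditions (C1) via Lemma \ref{lm:clt_s1}, (C2) via exactly the fourth-moment Lyapunov ratio you describe (controlled by Assumption \ref{ass:moment} together with $\omega=o(C^{(1-2\beta)/3})$ and $p^{-1}=O(C^\beta)$, and kept nonvanishing in the denominator by $C\Var(\bar\tau)\nrightarrow 0$), and (2.8) via the nonpositive pairwise inclusion covariances of without-replacement sampling. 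The only difference is presentational: you re-derive the characteristic-function stitching of the conditional second-stage CLT with the first-stage CLT, whereas the paper cites Ohlsson's theorem for that step.
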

Theorem \ref{thm:clt_s2} generalizes classical two‑stage survey‑sampling CLTs \citep{ohlsson1989asymptotic,chauvet2020inference} to a causal estimand with cluster‑level assignment. It also parallels the main asymptotic normality result of
\citet{abadie2023should}, who consider clustered sampling and assignment in a more general design-based framework.
Our analysis differs, however, in focusing specifically on cluster-randomized experiments with explicit two-stage sampling. Consequently, our rate conditions are expressed in terms of the number of clusters and the heterogeneity of their sizes, rather than only the total number of sampled units. \citet{abadie2023should} likewise require the number of sampled clusters to diverge, but because their framework is broader (not limited to CRTs), their asymptotic statements are framed in a more general sampling and treatment assignment environment.

Lemma \ref{lm:clt_s1} and Theorem \ref{thm:clt_s2} jointly show that under mild design-based conditions, the Horvitz–Thompson estimator is asymptotically normal—first when cluster totals are known, and then when they are estimated via within-cluster sampling. In empirical terms, this result justifies using standard normal-based inference for two-stage cluster-randomized designs, provided that the number of sampled clusters is sufficiently large and no small subset of clusters dominates the sample.

\section{Estimators of Exact Variance}\label{sec:bound}
In clustered randomized controlled trials, we never observe both $\tilY_c(1)$ or $\tilY_c(0)$ for the same cluster $c$. Even without within-cluster subsampling, each cluster reveals only one potential cluster total. Following \citet{aronow2014sharp}, we construct sharp bounds on the exact variance of $\widehat{\tau}$. 

Throughout, expectations $\mathbb{E}[\cdot]$ and covariances are design-based (i.e., taken over the sampling and assignment mechanisms). For brevity, define the design covariances
\[
\Delta_{cc'}^1 = \mathbb{E}[ R_c D_c R_{c'} D_{c'} ] - \mathbb{E}[ R_c D_c ] \mathbb{E}[ R_{c'} D_{c'} ],
\]
\[
\Delta_{cc'}^0 = \mathbb{E}[ R_c (1 - D_c) R_{c'} (1 - D_{c'}) ] - \mathbb{E}[ R_c (1 - D_c) ] \mathbb{E}[ R_{c'} (1 - D_{c'}) ],
\]
and, for the second stage,
\[
\Delta_{ij \mid c} = \mathbb{E}[ R_{i \mid c} R_{j \mid c} ] - \mathbb{E}[ R_{i \mid c} ] \mathbb{E}[ R_{j \mid c} ].
\]

A convenient rearrangement of the exact variance of $\widehat{\tau}$ is
\begin{align*}
    \Var\left( \widehat{\tau} \right)& = V_1 + V_0 + \frac{2}{C}\sigma\left(\tilY(1),\tilY(0)\right),
\end{align*}
where
\begin{align*}
    V_1 &= \frac{1}{C^2} \Bigg\{ \sum_{c=1}^C \sum_{c'=1}^C \frac{\Delta_{cc'}^1  {\tilY}_c(1) {\tilY}_{c'}(1)}{\mathbb{E}[ R_c D_c ] \mathbb{E}[ R_{c'} D_{c'} ] }  + \sum_{c=1}^C \sum_{i=1}^{N_c} \sum_{j=1}^{N_c} \frac{ \Delta_{ij \mid c} \tilY_{ci}(1)\tilY_{cj}(1) }{ \mathbb{E}[ R_{i \mid c} ] \mathbb{E}[ R_{j \mid c} ] }\Bigg\}, \\
    V_0 &= \frac{1}{C^2} \Bigg\{ \sum_{c=1}^C \sum_{c'=1}^C \frac{\Delta_{cc'}^0 {\tilY}_c(0) {\tilY}_{c'}(0)}{\mathbb{E}[ R_c (1-D_c) ] \mathbb{E}[ R_{c'} (1-D_{c'}) ] }  + \sum_{c=1}^C \sum_{i=1}^{N_c} \sum_{j=1}^{N_c} \frac{ \Delta_{ij \mid c} \tilY_{ci}(0)\tilY_{cj}(0) }{ \mathbb{E}[ R_{i \mid c} ] \mathbb{E}[ R_{j \mid c} ] }  \Bigg\},
\end{align*}
and
\begin{align*}
    \sigma\left(\tilY(1),\tilY(0)\right) &= \frac{1}{C-1}\left(\sum_{c=1}^C{\tilY}_c(1){\tilY}_{c}(0) - \left(\frac{1}{C}\sum_{c=1}^C{\tilY}_c(1)\right)\left(\frac{1}{C}\sum_{c=1}^C{\tilY}_{c}(0)\right)\right).
\end{align*}
Here, $V_1$ and $V_0$ collect the first-stage (cluster) contributions together with the second-stage (unit) contributions within each arm. The term $\sigma\left(\tilY(1),\tilY(0)\right)$ denotes the cross-arm covariance of cluster totals, taken across clusters. 

With unit-level sampling within the sampled clusters, $\tilY_c(1)$ or $\tilY_c(0)$ cannot be observed for any $c$ unless the second-stage sampling probability $\pi_c=1$. Instead, we only observe  the within-cluster Horvitz-Thompson estimates of the scaled cluster totals, denoted by $\widehat{\tilY}_c(d)$. If, counterfactually, we could draw independent subsamples of units from each cluster to estimate both $\widehat{\tilY}_c(1)$ and $\widehat{\tilY}_c(0)$ for the same cluster $c$, then $\E\left[\widehat{\tilY}_c(1)\widehat{\tilY}_{c'}(0)\right] = \tilY_c(1)\tilY_{c'}(0)$, $\forall c, c'$. In that hypothetical setting, a plug-in estimator 
\begin{align*}
    \widetilde{\Var}\left( \widehat{\tau} \right) 
    &= \widetilde{V}_1 + \widetilde{V}_0 + \frac{2}{C}\sigma\left(\widehat{\tilY}(1),\widehat{\tilY}(0)\right),
\end{align*}
would be unbiased and consistent for ${\Var}\left( \widehat{\tau} \right)$. Here,
\begin{align*}
&\widetilde{V}_1 = \frac{1}{C^2} \Bigg\{ \sum_{c=1}^C \sum_{c'=1}^C \frac{\Delta_{cc'}^1  \widehat{\tilY}_c(1) \widehat{\tilY}_{c'}(1)}{\mathbb{E}[ R_c D_c ] \mathbb{E}[ R_{c'} D_{c'} ] }  + \sum_{c=1}^C \sum_{i=1}^{N_c} \sum_{j=1}^{N_c} \frac{ \Delta_{ij \mid c} R_{i \mid c}R_{j \mid c}\tilY_{ci}(1)\tilY_{cj}(1) }{ \mathbb{E}[ R_{i \mid c}R_{j \mid c}  ]\mathbb{E}[ R_{i \mid c} ] \mathbb{E}[ R_{j \mid c} ] }  \Bigg\}, \\
& \widetilde{V}_0  = \frac{1}{C^2} \Bigg\{ \sum_{c=1}^C \sum_{c'=1}^C \frac{\Delta_{cc'}^0 \widehat{\tilY}_c(0) \widehat{\tilY}_{c'}(0)}{\mathbb{E}[ R_c (1-D_c) ] \mathbb{E}[ R_{c'} (1-D_{c'}) ] }  + \sum_{c=1}^C \sum_{i=1}^{N_c} \sum_{j=1}^{N_c} \frac{ \Delta_{ij \mid c} R_{i \mid c}R_{j \mid c}\tilY_{ci}(0)\tilY_{cj}(0) }{ \mathbb{E}[ R_{i \mid c}R_{j \mid c}  ]\mathbb{E}[ R_{i \mid c} ] \mathbb{E}[ R_{j \mid c} ] }  \Bigg\}, \\
& \sigma\left(\widehat{\tilY}(1),\widehat{\tilY}(0)\right) = \frac{1}{C-1}\left(\sum_{c=1}^C\widehat{\tilY}_c(1)\widehat{\tilY}_{c}(0) - \left(\frac{1}{C}\sum_{c=1}^C\widehat{\tilY}_c(1)\right)\left(\frac{1}{C}\sum_{c=1}^C\widehat{\tilY}_{c}(0)\right)\right).
\end{align*}

In practice, however, we cannot observe both potential cluster totals for any given cluster. With second-stage sampling, each cluster provides only one observed aggregate, either $\widehat{\widetilde{Y}}_c(1)$ or $\widehat{\widetilde{Y}}_c(0)$, depending on its assignment. Formally,
\begin{align*}
\widehat{\widetilde{Y}}_c(d)=\sum_{i=1}^{N_c}\frac{R_{i|c}\widetilde{Y}_{ci}(d)}{\pi_c} = \frac{1}{\Bar{N}}sum_{i=1}^{N_c}\frac{R_{i|c}\widetilde{Y}_{ci}(d)}{\pi_c}, d\in\{0,1\}.
\end{align*}

Hence, $\widetilde{\Var}\left( \widehat{\tau} \right)$ is infeasible because it relied on cross-arm covariance terms involving both $\widehat{\widetilde{Y}}_c(1)$ and $\widehat{\widetilde{Y}}_c(0)$ for the same $c$. To address this, we apply Fr\'echet-Hoeffding covariance bounds proposed by \cite{aronow2014sharp} to construct an interval estimator for this infeasible variance. These bounds describe the maximal and minimal possible covariance consistent with the observed marginal distributions of $\widehat{\widetilde{Y}}_c(1)$ and $\widehat{\widetilde{Y}}_c(0)$.

We next characterize the maximal and minimal possible covariance between the estimated cluster aggregates under treatment and control. Define the empirical cumulative distribution functions (CDFs) of $\widehat{\tilY}_c(1)$ and $\widehat{\tilY}_c(0)$ as
\[
G(y) = \frac{1}{C} \sum_{c=1}^C \mathbf{1}\left(\widehat{\tilY}_c(1) \leq y \right), \quad F(y) = \frac{1}{C} \sum_{c=1}^C \mathbf{1}\left(\widehat{\tilY}_c(0) \leq y \right),
\]
where $\mathbf{1}(\cdot)$ is the indicator function. The corresponding left-continuous inverses (quantile functions) of these CDFs are defined as
\[
G^{-1}(u) = \inf \{ y : G(y) \geq u \}, \quad F^{-1}(u) = \inf \{ y : F(y) \geq u \}, \quad \text{for } u \in [0, 1].
\]

Using these marginal distributions, the Fr\'echet-Hoeffding bounds identify the extreme possible covariances between $\widehat{\tilY}_c(1)$ and $\widehat{\tilY}_c(0)$ compatible with the observed marginals: We then define the maximal and minimal possible covariances:

\textit{Maximal Covariance (Positive Association):}
\[
\sigma^H\left(\widehat{\tilY}(1),\widehat{\tilY}(0) \right) = \int_0^1 G^{-1}(u) F^{-1}(u) \, du - \left(\frac{1}{C}\sum_{c=1}^C\widehat{\tilY}_c(1)\right)\left(\frac{1}{C}\sum_{c=1}^C\widehat{\tilY}_{c}(0)\right).
\]

\textit{Minimal Covariance (Negative Association):}
\[
\sigma^L\left(\widehat{\tilY}(1),\widehat{\tilY}(0) \right) = \int_0^1 G^{-1}(u) F^{-1}(1 - u) \, du -  \left(\frac{1}{C}\sum_{c=1}^C\widehat{\tilY}_c(1)\right)\left(\frac{1}{C}\sum_{c=1}^C\widehat{\tilY}_{c}(0)\right).
\]

These bounds provide the most precise interval for $\sigma\left(\widehat{\tilY}(1),\widehat{\tilY}(0) \right)$ given the observed data, without making additional assumptions about the joint distribution of $\widehat{\tilY}(1)$ and $\widehat{\tilY}(0)$.  Different from \cite{aronow2014sharp}, we are now constructing bounds on the covariance between $\widehat{\tilY}_c(1)$ and $\widehat{\tilY}_c(0)$, instead of the covariance between true scaled cluster totals. 

By replacing $\sigma\left(\widehat{\tilY}(1),\widehat{\tilY}(0) \right)$ in $\widetilde{\Var}\left(\widehat{\tau}\right)$ with $\sigma^H\left(\widehat{\tilY}(1),\widehat{\tilY}(0) \right)$ and $\sigma^L\left(\widehat{\tilY}(1),\widehat{\tilY}(0) \right)$ respectively, we could obtain $\widetilde{\Var}^H\left(\widehat{\tau}\right)$ and $\widetilde{\Var}^L\left(\widehat{\tau}\right)$, delivering the sharpest bounds on $\widetilde{\Var}\left(\widehat{\tau}\right)$.

In practice, we do not observe both potential aggregates for any cluster, since we only observe $\widehat{\tilY}_c(1)$ for treated sampled clusters and $\widehat{\tilY}_c(0)$ for control sampled clusters. We therefore construct the empirical cumulative distribution functions of the estimated totals $\widehat{\tilY}_c(1)$ and $\widehat{\tilY}_c(0)$ as 
\[
\widehat{G}(y) = \frac{1}{C} \sum_{c=1}^C \frac{R_c D_c}{p q} \mathbf{1} \left\{ \widehat{\tilY}_c(1) \leq y \right\}, \quad
\widehat{F}(y) = \frac{1}{C} \sum_{c=1}^C \frac{R_c (1 - D_c)}{p (1 - q)} \mathbf{1} \left\{ \widehat{\tilY}_c(0) \leq y \right\}.
\]

These weights ensure that the empirical distributions of the observed treated and control clusters estimate the population marginals of $\widehat{\tilY}_c(1)$ and $\widehat{\tilY}_c(0)$. The corresponding left-continuous inverses of $\widehat{G}$ and $\widehat{F}$ are
\[
\widehat{G}^{-1}(u) = \inf \left\{ y : \widehat{G}(y) \geq u \right\} = \widehat{\tilY}_{(\lceil S_1 u \rceil)}(1),
\quad
\widehat{F}^{-1}(u) = \inf \left\{ y : \widehat{F}(y) \geq u \right\} = \widehat{\tilY}_{(\lceil S_0 u \rceil)}(0),
\]
where $\widehat{\tilY}_{(1)}(1) \leq \dots \leq \widehat{\tilY}_{(S_1)}(1)$ and $\widehat{\tilY}_{(1)}(0) \leq \dots \leq \widehat{\tilY}_{(S_0)}(0)$ are the ordered aggregates for the $S_1$ treated and $S_0$ control clusters. 

To approximate the integrals above, let $\mathcal{B} = \{ b_0, b_1, \dots, b_B \}$ be the ordered distinct values in the set $\{ 0, 1/S_1, 2/S_1, \dots, 1 \} \cup \{ 0, 1/S_0, 2/S_0, \dots, 1 \}$, which partitions $[0, 1]$. 

Define the corresponding stepwise quantiles
\[
\widehat{\tilY}_{[h]}(1) = \widehat{\tilY}_{(\lceil S_1 b_h \rceil)}(1), \quad \widehat{\tilY}_{[h]}(0) = \widehat{\tilY}_{(\lceil S_0 b_h \rceil)}(0), \quad h = 1, \dots, B.
\]
Then the sample analogues of the covariance bounds are 
\begin{align*}
\widehat{\sigma}^{H}\left( \widehat{\tilY}(1), \widehat{\tilY}(0) \right) &= \sum_{h=1}^B (b_h - b_{h-1}) \widehat{\tilY}_{[h]}(1) \widehat{\tilY}_{[h]}(0) - \left( \frac{1}{C} \sum_{c=1}^C \frac{R_c D_c \widehat{\tilY}_c }{pq} \right) \left( \frac{1}{C} \sum_{c=1}^C \frac{R_c (1 - D_c)\widehat{\tilY}_c}{p(1-q)}  \right), \\
\widehat{\sigma}^{L}\left( \widehat{\tilY}(1), \widehat{\tilY}(0) \right) &= \sum_{h=1}^B (b_h - b_{h-1}) \widehat{\tilY}_{[h]}(1) \widehat{\tilY}_{[B + 1 - h]}(0) - \left( \frac{1}{C} \sum_{c=1}^C \frac{R_c D_c \widehat{\tilY}_c }{pq} \right) \left( \frac{1}{C} \sum_{c=1}^C \frac{R_c (1 - D_c)\widehat{\tilY}_c}{p(1-q)}  \right).
\end{align*}

Again, with the first-stage sampling and the treatment assignment, $\widetilde{V}_1$ and $\widetilde{V}_0$ is not directly observed. An unbiased and consistent estimator for $\widetilde{V}_1$ is 
\begin{align*}
    \widehat{V}_1 = \frac{1}{C^2} & \Bigg\{ \sum_{c=1}^C \sum_{c'=1}^C \frac{\Delta_{cc'}^1 R_c D_c R_{c'} D_{c'} \widehat{\tilY}_c \widehat{\tilY}_{c'}}{ \mathbb{E}[ R_c D_c R_{c'} D_{c'} ] \mathbb{E}[ R_c D_c ] \mathbb{E}[ R_{c'} D_{c'} ] }  \\
     &+ \sum_{c=1}^C \frac{ R_c D_c }{ \mathbb{E}[ R_c D_c ] }\sum_{i=1}^{N_c} \sum_{j=1}^{N_c} \frac{ \Delta_{ij \mid c} R_{i \mid c} R_{j \mid c}\tilY_{ci}\tilY_{cj} }{ \mathbb{E}[ R_{i \mid c} R_{j \mid c} ] \mathbb{E}[ R_{i \mid c} ] \mathbb{E}[ R_{j \mid c} ] } \Bigg\}.
\end{align*}
Similarly, an unbiased and consistent estimator for $\widetilde{V}_0$ is 
\begin{align*}
    \widehat{V}_0 = \frac{1}{C^2} &\Bigg\{ \sum_{c=1}^C \sum_{c'=1}^C \frac{\Delta_{cc'}^0 R_c (1-D_c) R_{c'} (1-D_{c'}) \widehat{\tilY}_c \widehat{\tilY}_{c'}}{ \mathbb{E}[ R_c (1-D_c) R_{c'} (1-D_{c'}) ] \mathbb{E}[ R_c (1-D_c) ] \mathbb{E}[ R_{c'} (1-D_{c'}) ] } \\
    & + \sum_{c=1}^C \frac{ R_c (1-D_c) }{ \mathbb{E}[ R_c (1-D_c)  ] }\sum_{i=1}^{N_c} \sum_{j=1}^{N_c} \frac{ \Delta_{ij \mid c} R_{i \mid c} R_{j \mid c}\tilY_{ci}\tilY_{cj} }{ \mathbb{E}[ R_{i \mid c} R_{j \mid c} ] \mathbb{E}[ R_{i \mid c} ] \mathbb{E}[ R_{j \mid c} ] } \Bigg\}.
\end{align*}
The above variance estimators are the so-called Horvitz-Thompson variance estimator, and the counterparts in survey sampling can be found in \cite{chauvet2020inference}. Each line mirrors the classic HT variance form: a cluster-level piece that uses pairwise cluster inclusion covariances $\Delta_{cc'}^d$, and a unit-level piece that uses pairwise unit inclusion covariances $\Delta_{ij|c}$, with the appropriate design probabilities in the denominators. Under our regular conditions, these estimators are unbiased for the corresponding variance components and consistent as $C\rightarrow\infty$.

Combining the arm-specific pieces with the sharp covariance bound estimators $\widehat{\sigma}^{H}$ and $\widehat{\sigma}^{L}$,  yields an interval estimator ${\widehat{\Var}^{H}(\widehat{\tau})}$ and ${\widehat{\Var}^{L}(\widehat{\tau})}$ as:
\begin{align*}
{\widehat{\Var}^{H}(\widehat{\tau})} &= \widehat{V}_1 + \widehat{V}_0 + \frac{2}{C}\widehat{\sigma}^{H}\left( \widehat{\tilY}(1), \widehat{\tilY}(0) \right) \\
{\widehat{\Var}^{H}(\widehat{\tau})} &= \widehat{V}_1 + \widehat{V}_0 + \frac{2}{C}\widehat{\sigma}^{L}\left( \widehat{\tilY}(1), \widehat{\tilY}(0) \right).
\end{align*}

\begin{proposition}\label{prop:bound}
    Under Assumption \ref{ass:moment} - \ref{ass:size}, $C{\widehat{\Var}^{H}(\widehat{\tau})} \xrightarrow{p} C{\widetilde{\Var}^{H}(\widehat{\tau})}$ and $C\widehat{\Var}^L(\widehat{\tau})\xrightarrow{p} C\widetilde{\Var}^L(\widehat{\tau})$ as $C \to \infty$; that is, $\left[{\widehat{\Var}^{L}(\widehat{\tau})}-{\widetilde{\Var}^{L}(\widehat{\tau})}, {\widehat{\Var}^{H}(\widehat{\tau})}-{\widetilde{\Var}^{H}(\widehat{\tau})}\right]=o_p(1)$.
\end{proposition}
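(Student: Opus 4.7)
The plan is to decompose
\begin{align*}
C\widehat{\Var}^H(\widehat{\tau}) - C\widetilde{\Var}^H(\widehat{\tau})
&= C(\widehat{V}_1 - \widetilde{V}_1) + C(\widehat{V}_0 - \widetilde{V}_0) \\
&\quad + 2\bigl[\widehat{\sigma}^H(\widehat{\tilY}(1),\widehat{\tilY}(0)) - \sigma^H(\widehat{\tilY}(1),\widehat{\tilY}(0))\bigr]
\end{align*}
and to show each summand is $o_p(1)$; the $L$-bound case is handled by an identical argument with $\widehat{F}^{-1}(1-u)$ replacing $\widehat{F}^{-1}(u)$.

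For the arm-specific pieces, I would first verify that, conditional on the second-stage indicators $\{R_{i\mid c}\}$, $\E[\widehat{V}_1 \mid \{R_{i\mid c}\}] = \widetilde{V}_1$ (and symmetrically for $\widetilde{V}_0$). This follows from the HT normalizations $\E[R_c D_c R_{c'} D_{c'}]$ and $\E[R_c D_c]$ in the denominators of $\widehat{V}_1$, together with the observation that on the event $R_c D_c = 1$ we have $\widehat{\tilY}_c = \widehat{\tilY}_c(1)$ and the within-cluster $\tilY_{ci}$ coincide with $\tilY_{ci}(1)$. A conditional Chebyshev bound on $C(\widehat{V}_1-\widetilde{V}_1)$ then combines Assumption \ref{ass:moment} (bounded unit-level fourth moments) with Assumptions \ref{ass:cluster_sampling}--\ref{ass:size} (keeping $p^{-1}$, the weights $1/\pi_c$, and the cluster-size heterogeneity controlled) to show that the conditional variance is $o_p(1)$; integrating out delivers the unconditional claim for both arms.

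The covariance-bound piece is the more delicate part and follows the Fr\'echet--Hoeffding strategy of \citet{aronow2014sharp} in three steps. Step (i): conditional on the hypothetical aggregates $\{\widehat{\tilY}_c(1),\widehat{\tilY}_c(0)\}_{c=1}^C$, establish a finite-population HT-weighted Glivenko--Cantelli result, $\sup_y|\widehat{G}(y)-G(y)|=o_p(1)$ and $\sup_y|\widehat{F}(y)-F(y)|=o_p(1)$. Pointwise unbiasedness of $\widehat{G}$ for $G$ is built into the weights $R_c D_c/(pq)$; uniform convergence follows from pointwise convergence on a countable dense set combined with monotonicity, once Assumption \ref{ass:cluster_sampling} guarantees $S_1,S_0\to\infty$. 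Step (ii): convert uniform CDF convergence into convergence of the quantile functions at continuity points and use Assumption \ref{ass:moment} (which uniformly bounds the second moments of $\widehat{\tilY}_c(d)$ and hence delivers uniform integrability of the products $\widehat{G}^{-1}(u)\widehat{F}^{-1}(u)$) to deduce $\int_0^1\widehat{G}^{-1}(u)\widehat{F}^{-1}(u)\,du = \int_0^1 G^{-1}(u)F^{-1}(u)\,du + o_p(1)$. Step (iii): observe that the step-function Riemann sum $\sum_{h=1}^B(b_h-b_{h-1})\widehat{\tilY}_{[h]}(1)\widehat{\tilY}_{[h]}(0)$ agrees with $\int_0^1\widehat{G}^{-1}(u)\widehat{F}^{-1}(u)\,du$ up to an $O(1/\min(S_0,S_1))$ discretization, since $\widehat{G}^{-1}$ and $\widehat{F}^{-1}$ are left-continuous step functions whose joint jump set is $\mathcal{B}$. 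The product-of-means terms subtracted off in $\widehat{\sigma}^H$ and $\sigma^H$ converge to the same population product $\bar{Y}(1)\bar{Y}(0)$ by HT-consistency of the arm-specific mean estimators, so their difference is $o_p(1)$.

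The main obstacle is Step (ii): translating uniform convergence of HT-weighted empirical CDFs into convergence of the integrals of products of their left-continuous inverses, since the quantile functions are unbounded in principle and the underlying finite-population CDFs are discontinuous. The key leverage comes from Assumption \ref{ass:moment}, which controls the tails of $\widehat{G}^{-1}$ and $\widehat{F}^{-1}$ and yields uniform integrability on $[0,1]$; combined with the continuity of the Fr\'echet--Hoeffding rearrangement in the $L^2$-Wasserstein metric, this delivers the convergence of $\widehat{\sigma}^H$ and $\widehat{\sigma}^L$ to $\sigma^H$ and $\sigma^L$ and completes the argument.
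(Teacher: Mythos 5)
Your decomposition and your treatment of the arm-specific pieces coincide with the paper's: the paper likewise shows $\E[\widehat{V}_1\mid\{\widehat{\tilY}_c(1)\}_{c=1}^C]=\widetilde{V}_1$ by exploiting the HT normalizations, and then bounds the conditional second moment of $\widehat{V}_1-\widetilde{V}_1$ term by term (separating the cluster-pair piece and the within-cluster piece, exactly as your $\Gamma_{cc'}$/$\Lambda_c$-style split would) using the fourth-moment, sampling-rate, and cluster-size-heterogeneity assumptions, concluding by Chebyshev; the control arm is handled symmetrically. Where you genuinely go beyond the paper is the covariance-bound piece: the paper's written proof stops after establishing $\widehat{V}_d\to^p\widetilde{V}_d$ and never actually proves $\widehat{\sigma}^{H}\to^p\sigma^{H}$ or $\widehat{\sigma}^{L}\to^p\sigma^{L}$, even though that convergence is needed for the stated conclusion. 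Your three-step argument (HT-weighted Glivenko--Cantelli for $\widehat{G},\widehat{F}$ conditional on the second stage, continuity of the Hoeffding rearrangement functional in the $L^2$-Wasserstein metric, and the observation that the stepwise sum over $\mathcal{B}$ is the integral of the product of the step quantile functions --- in fact exactly, with no discretization error, since $\mathcal{B}$ is the common refinement of the jump sets) is a sound way to fill that gap and is in the spirit of \citet{aronow2014sharp}. Two points need care in Step (ii): first, the uniform integrability of $\widehat{G}^{-1}(u)^2$ must be derived from Assumption \ref{ass:moment} \emph{through} the inverse weights $1/\pi_c$ appearing in $\widehat{\tilY}_c(d)$, and Assumption \ref{ass:sample_units} only guarantees $\pi_c>0$, so either a lower bound on $\min_c\pi_c$ or a tracking of the $\min_c\pi_c^{-1}$ factors (as the paper's own rate calculations implicitly do) is required; second, since the finite population and hence $G$ and $F$ change with $C$, the quantile-convergence step should be stated as a uniform, triangular-array Wasserstein bound between $\widehat{G}$ and $G$ rather than as convergence to a fixed limit distribution. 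Neither issue is fatal, and on the portion the paper actually proves your argument matches it.
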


\begin{remark}
    An alternative route is to apply the \cite{aronow2014sharp} bounds directly to the covariance of the true (unobserved) cluster totals. This yields variance bounds $\left[\Var^L\left(\widehat{\tau}\right),\Var^H\left(\widehat{\tau}\right)\right]$. To estimate those bounds in our setting, two adjustments are required: (i) replace the unobserved marginals of those of $\widetilde{Y}_c(d)$ by the marginals of the estimated cluster totals $\widehat{\widetilde{Y}}_c(d)$, which incorporate second-stage within-cluster sampling; and (ii) move from the population of clusters to the sampled clusters using HT weights, which account for first-stage sampling and treatment assignment. Operationally, these steps lead to the same feasible interval estimators we propose. Under stronger growth conditions, i.e., $C\rightarrow\infty$ and $\min_cN_c\rightarrow \infty$ (so that within-cluster estimation error vanishes and the empirical marginals of $\widehat{\widetilde{Y}}_c(d)$ converge to those of ${\widetilde{Y}}_c(d)$, the interval estimators converge to $\left[\Var^L\left(\widehat{\tau}\right),\Var^H\left(\widehat{\tau}\right)\right]$. 
    
    Our approach, however, deliberately targets the bounds $\left[\widetilde{\Var}^L\left(\widehat{\tau}\right),\widetilde{\Var}^H\left(\widehat{\tau}\right)\right]$ defined with the estimated cluster aggregates. This choice has a practical advantage: it yields consistent bound estimators without requiring $\min_c N_c\rightarrow \infty$; that is, cluster sizes may remain bounded while the number of clusters grow.
\end{remark}

Additionally, dropping the negative term $-\frac{1}{C-1}\sum_{c=1}^C\left(\tau_c-\tau\right)^2$ in (\ref{eq:exact_var_2nd}), we could obtain a conservative part of the exact variance, denoted as $\Var(\widehat{\tau})^{\text{consv}}$:
\begin{align*}
\Var\left( \widehat{\tau} \right)^{\text{consv}} &= \frac{1}{C}\frac{\frac{1}{C-1}\sum_{c=1}^C\left(\tilY_c(1)-\bar{{Y}}(1)\right)^2}{pq} + \frac{1}{C^2}\sum_{c=1}^C\frac{(1-\pi_c)(1-\tilde{\pi}_c)}{\pi_c(\pi_c-\tilde{\pi}_c)}\frac{\frac{1}{N_c-1}\sum_{i=1}^{N_c}(\tilY_{ci}(1)-\bar{\tilY}_c(1))^2}{pq} \\
&+\frac{1}{C}\frac{\frac{1}{C-1}\sum_{c=1}^C\left(\tilY_c(0)-\bar{{Y}}(0)\right)^2}{p(1-q)} +  \frac{1}{C^2}\sum_{c=1}^C\frac{(1-\pi_c)(1-\tilde{\pi}_c)}{\pi_c(\pi_c-\tilde{\pi}_c)} \frac{\frac{1}{N_c-1}\sum_{i=1}^{N_c}(\tilY_{ci}(0)-\bar{\tilY}_c(0))^2}{p(1-q)}.
\end{align*}
If treatment effects are homogeneous across clusters, then $\Var\left( \widehat{\tau} \right)^{\text{consv}} = \Var\left( \widehat{\tau} \right)$, this conservative variance part is exact. An estimator for the above conservative part of the exact variance is
\begin{align*}
\widehat{\Var}(\widehat{\tau})^{\text{consv}}& = \frac{1}{C^2} \Bigg\{  \frac{1}{1 -pq} \sum_{c=1}^C \sum_{c'=1}^C \frac{ \Delta_{cc'}^1 R_c D_c R_{c'} D_{c'} \widehat{\tilY}_c \widehat{\tilY}_{c'} }{ \mathbb{E}[ R_c D_c R_{c'} D_{c'} ] \mathbb{E}[ R_c D_c ] \mathbb{E}[ R_{c'} D_{c'} ] } \\
& + \frac{1}{1 - p(1-q)} \sum_{c=1}^C \sum_{c'=1}^C \frac{ \Delta_{cc'}^0 R_c (1 - D_c) R_{c'} (1 - D_{c'}) \widehat{\tilY}_c \widehat{\tilY}_{c'} }{ \mathbb{E}[ R_c (1 - D_c) R_{c'} (1 - D_{c'}) ] \mathbb{E}[ R_c (1 - D_c) ] \mathbb{E}[ R_{c'} (1 - D_{c'}) ] }.
\end{align*}

\begin{proposition}\label{prop:conserve}
$C\Var(\widehat{\tau})\leq C\Var(\widehat{\tau})^{\text{consv}}$. If the fourth moment of the potential outcomes is bounded, and $\left(\frac{C}{S_d}-1\right)\sigma^2\left(\tilY(d)\right)\nrightarrow 0$, $d=0,1$, then $C\widehat{\Var}(\widehat{\tau})^{\text{consv}}$ is an unbiased and consistent estimator of $C\Var(\widehat{\tau})^{\text{consv}}$.
\end{proposition}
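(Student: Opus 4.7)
The plan is to prove the proposition in three steps: the dominance inequality, unbiasedness of the estimator, and its consistency. The dominance $C\Var(\widehat{\tau}) \leq C\Var(\widehat{\tau})^{\text{consv}}$ is immediate from equation (\ref{eq:exact_var_2nd}) of Proposition \ref{prop:exact_var}: the conservative version is obtained by dropping the nonpositive summand $-\frac{1}{C(C-1)}\sum_c(\tau_c-\tau)^2 = -\sigma^2(\tau)/C$, so the difference $\Var(\widehat{\tau})^{\text{consv}} - \Var(\widehat{\tau}) = \sigma^2(\tau)/C \ge 0$, with equality exactly when cluster-level treatment effects are homogeneous. Note that although $\Var(\widehat{\tau})^{\text{consv}}$ contains within-cluster variance components while the estimator $\widehat{\Var}(\widehat{\tau})^{\text{consv}}$ displays only cluster-level terms, the within-cluster contributions will be recovered automatically through the diagonal second moment of $\widehat{\tilY}_c$, as shown next.

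For unbiasedness, I compute $\E[\widehat{\Var}(\widehat{\tau})^{\text{consv}}]$ by iterated expectations, first over within-cluster samplings (which are mutually independent across clusters) and then over the first-stage $(R_c,D_c)$. For $c\neq c'$, independence gives $\E[R_cD_c R_{c'}D_{c'}\widehat{\tilY}_c\widehat{\tilY}_{c'}] = \E[R_cR_{c'}D_cD_{c'}]\,\tilY_c(1)\tilY_{c'}(1) = \frac{S_1(S_1-1)}{C(C-1)}\tilY_c(1)\tilY_{c'}(1)$, while the diagonal $c=c'$ produces $\E[R_cD_c\widehat{\tilY}_c^2] = pq\,(\tilY_c(1)^2 + v_c(1))$, where $v_c(1) = \Var(\widehat{\tilY}_c\mid R_c=1,D_c=1) = \frac{(1-\pi_c)(1-\tilde{\pi}_c)}{\pi_c(\pi_c-\tilde{\pi}_c)}\cdot\frac{1}{N_c-1}\sum_i(\tilY_{ci}(1)-\bar{\tilY}_c(1))^2$ is the two-stage HT within-cluster variance. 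Substituting the design covariances $\Delta_{cc}^1 = pq(1-pq)$ and $\Delta_{cc'}^1 = -\frac{pq(C-S_1)}{C(C-1)}$ for $c\neq c'$, using $C-S_1 = C(1-pq)$, and invoking the identity $\sum_c\tilY_c(1)^2 - (C-1)^{-1}\sum_{c\neq c'}\tilY_c(1)\tilY_{c'}(1) = C\sigma^2(1)$, the treated-arm double sum collapses to $\frac{1-pq}{pq}\bigl[C\sigma^2(1) + \sum_c v_c(1)\bigr]$. The outer scale $1/[C^2(1-pq)]$ then cancels $1-pq$ cleanly and yields $\sigma^2(1)/(Cpq) + C^{-2}\sum_c v_c(1)/(pq)$, which matches the treated half of $\Var(\widehat{\tau})^{\text{consv}}$. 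Swapping $(q, D_c) \mapsto (1-q, 1-D_c)$ recovers the control half by the same calculation.

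Consistency follows from Chebyshev once $\Var(C\widehat{\Var}(\widehat{\tau})^{\text{consv}}) \to 0$ is established. Expanding the quadratic form, this variance is a sum of products of at most four potential outcomes $Y_{ci}(d)$ weighted by products of inverse inclusion and assignment probabilities. Assumption \ref{ass:moment} bounds the fourth-moment population average; Assumptions \ref{ass:cluster_sampling}--\ref{ass:sample_units} give lower bounds on the relevant single and pairwise inclusion/assignment probabilities; and Assumption \ref{ass:size} prevents any single cluster from dominating. Enumerating contributions by the coincidence pattern of cluster and unit indices, each type is $o(1/C^2)$, so $\Var(C\widehat{\Var}(\widehat{\tau})^{\text{consv}}) = o(1)$; the non-degeneracy condition $(C/S_d-1)\sigma^2(\tilY(d))\nrightarrow 0$ ensures $C\Var(\widehat{\tau})^{\text{consv}}\nrightarrow 0$, so the convergence is substantive rather than vacuous. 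The main obstacle is the unbiasedness calculation: the scaling $1/(1-pq)$ is engineered precisely to absorb the $(C-S_1)/C = 1-pq$ prefactor arising from the cluster-level design covariances while simultaneously producing the within-cluster HT variance piece from $\E[\widehat{\tilY}_c^2]$, and verifying that both land in the correct proportion requires careful bookkeeping across the diagonal and off-diagonal summands.
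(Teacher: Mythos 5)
Your proposal is correct and follows essentially the same route as the paper: the dominance inequality is read off directly from dropping the nonnegative term $\sigma^2(\tau)/C$ in \eqref{eq:exact_var_2nd}, and the unbiasedness computation --- iterated expectations over the second-stage sampling and then the cluster-level design, with the diagonal terms $\E[R_cD_c\widehat{\tilY}_c^2]=pq(\tilY_c(1)^2+v_c(1))$ supplying the within-cluster HT variance and the off-diagonal terms combining with them via $\Delta^1_{cc'}=-pq(1-pq)/(C-1)$ to produce $C\sigma^2(1)$ --- is exactly the calculation the paper carries out. Your consistency step remains a high-level sketch (the fourth-moment bookkeeping over index-coincidence patterns is asserted rather than verified), but the paper's own displayed proof likewise only writes out the unbiasedness computation, so this matches its level of detail.
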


\section{Simulation Results} \label{sec:simulation}
This section investigates the performance of our proposed bound estimators through Monte Carlo simulations. We compare it against the widely used Liang-Zeger cluster-robust standard error estimator. 

For each unit $i$ in cluster $c$, we generate two independent covariates $X_{ci}=(X_{1ci},X_{2ci})\in (U[0,1])^2$ and set $\gamma=(1,1)$. We consider four data generating processes (DGPs) that differ in the extent and source of treatment effect heterogeneity and clustering. All idiosyncratic noise terms are Gaussian with variance $25$. The DGPs are summarized in Table~\ref{tab:dgp}.
\begin{table}[htbp]
\renewcommand{\arraystretch}{1.4}
\centering
\begin{tabular}{c|c|c}
\hline
\hline
     & $Y_{ci}(0)$ & $Y_{ci}(1)$ \\
     \hline 
    DGP 1 & $\mathcal{N}(\gamma^TX_{ci},25)$ & $\mathcal{N}(\tau+\gamma^TX_{ci},25)$ \\
    DGP 2 & $\mathcal{N}(\gamma^TX_{ci},25)$ & $\mathcal{N}(\tau_{i}+\gamma^TX_{ci},25)$ \\
    DGP 3 & $\mathcal{N}(-\alpha_c+\gamma^TX_{ci},25)$ & $\mathcal{N}(\alpha_c+\tau_c+\gamma^TX_{ci},25)$\\
    DGP 4 & $\mathcal{N}(-\alpha_c+\gamma^TX_{ci},25)$ & $\mathcal{N}(\alpha_c+\tau_{ci}+\gamma^TX_{ci},25)$\\
    \hline
\end{tabular}
\caption{Data Generating Processes}
\caption*{\textit{DGP~1 is a benchmark with a constant treatment effect $\tau=50$. DGP~2 introduces individual-level heterogeneity without clustering, with $\tau_i\sim\mathcal{N}(50,100)$. DGP~3 incorporates cluster-level heterogeneity through a cluster-specific intercept $\alpha_c\sim\mathcal{N}(5,25)$ and a cluster-level treatment effect $\tau_c\sim\mathcal{N}(20,100)$; units within the same cluster share the same treatment effect, but effects vary across clusters. DGP~4 allows for both between- and within-cluster heterogeneity: cluster means vary as $\tau_c\sim\mathcal{N}(20,100)$, and individual treatment effects vary within clusters as $\tau_{ci}\sim\mathcal{N}(\tau_c,\sigma_c^2)$ with $\sigma_c^2\sim U[0,4]$. }}
\label{tab:dgp}
\end{table}
The four DGPs span a range of environments with increasing heterogeneity and clustering. DGP~1 is a benchmark with homogeneous treatment effects, serving as a baseline in which all variance estimators coincide. DGP~2 introduces individual-level treatment effect heterogeneity without clustering, corresponding to a standard i.i.d.\ setting. DGP~3 allows for cluster-level heterogeneity through cluster-specific treatment effects, capturing settings such as school- or village-level interventions where outcomes are correlated within clusters. DGP~4 further incorporates within-cluster heterogeneity by allowing individual treatment effects to vary around a cluster-specific mean. This final design is the most flexible and empirically realistic, combining both between- and within-cluster variation in treatment effects.

To mimic a two-stage sampling clustered experiment, we consider the following sampling and treatment assignment mechanisms. We consider the population has total clusters $C=120$. We randomly sample $S=80$ clusters without replacement. Among these, we randomly assign $S_1=40$ clusters to treatment, and the remaining $S_0=40$ clusters to control. 

In the second stage sampling, we consider two regimes to determine the number of sampled units $n_c$ per sampled cluster. The first regime (R1) samples units $n_c = \pi N_c$ from each cluster, with $\pi=0.8$ identical across clusters. Every sampled cluster contributes the same fraction of its members. The second regime (R2) samples the same number of units, i.e., $n_c=\Bar{n}=100$, from each cluster. Each cluster contributes the same number of sampled units, regardless of its true size.

We conduct 1,000 Monte Carlo replications for each DGP and sampling regime. In each replication, we draw samples, estimate the population ATE using the HT estimator $\widehat{\tau}$, compute the Monte Carlo standard deviation $\widehat{\text{se}}$, the proposed upper bound SE, $\widehat{\text{se}}^H(\widehat{\tau})$, and the Liang-Zeger cluster-robust SE, $\widehat{se}^{\text{LZ}}(\widehat{\tau})$, construct $95\%$ confidence intervals based on each SE and record the coverage and power. All the results are collected in Table \ref{tab:sim_results}.

\begin{table}[htbp]
\renewcommand{\arraystretch}{1.4}
    \centering
    \begin{tabular}{>{\centering\arraybackslash}p{1cm}>{\centering\arraybackslash}p{1cm}cclccclclc}
    \hline
    \hline
  & Sampling& &   && \multicolumn{3}{c}{Upper Bound Estimator}&  &\multicolumn{3}{c}{Liang-Zeger Estimator}\\
       &  Regime&  $\widehat{\tau}$&  $\text{se}(\widehat{\tau})$  &&  $\widehat{\text{se}}^{\text{H}}(\widehat{\tau})$& coverage & power  &   &$\widehat{\text{se}}^{\text{LZ}}(\widehat{\tau})$& coverage & power\\
         \hline
          \multirow{ 2}{*}{DGP1}  &  R1& 50.380 & 0.412   && 0.465 &  97.4\%&  100.0\%& & 0.457&  97.1\%& 100.0\%\\
         &  R2& 50.353 & 0.483  && 0.562 &  97.8\%&  100.0\%&&  0.535&  96.8\%& 100.0\%\\
         \multirow{ 2}{*}{DGP2}&  R1& 49,430 & 1.249    && 1.322  & 96.5\% & 100.0\%  &  &1.404 & 97.0\% &100.0\%\\
         &  R2 &  49.426&  1.461  &&  1.580&  96.7\%&  100.0\%&   &1.613&  97.1\%&100.0\%\\
        \multirow{ 2}{*}{DGP3} &  R1&  48.790&  14.325  &&  16.083&  96.6\%&  89.5\%&   & 17.976&  98.5\%& 82.6\%\\
         &  R2&  47.515 &  14.168 &&  16.073&  96.8\%&  87.2\%&   &17.924&  98.1\%&81.9\%\\
       \multirow{ 2}{*}{DGP4}  &  R1&  56.303&  13.273  &&  14.642&  96.7\%&  97.7\%&   & 16.277&  98.0\%&96.1\%\\
         &  R2&  55.904&  13.239  &&  14.716&  96.8\%&  97.7\%&   &16.331&  98.3\% & 95.4\%\\
    \hline
    \end{tabular}
    \caption{Simulation Results}
    \caption*{\textit{$\widehat{\tau}$ reports the Monte Carlo mean of the ATE estimator across 1,000 replications, and $\mathrm{se}(\widehat{\tau})$ is the corresponding Monte Carlo standard deviation. For the Upper Bound and Liang--Zeger estimators, we report the average estimated standard error, the empirical coverage probability of the associated 95\% confidence interval, and power. Sampling regime R1 corresponds to proportional-to-size within-cluster sampling, while R2 uses a fixed number of sampled units per cluster.}}
    \label{tab:sim_results}
\end{table}
When treatment effects are constant (DGP 1) or vary independently across individuals (DGP 2), both the upper-bound and Liang–Zeger estimators perform almost identically: coverage rates are near the nominal $95\%$, estimated SEs are close to the empirical standard deviation, power is essentially $100\%$ because the true signal is large relative to the sampling noise. This confirms that our estimator remains valid and does not over- or under-estimate variance in simple, without-clustering settings.

When treatment effects vary across clusters (DGP 3) or both within and across clusters (DGP 4), differences emerge: the Liang–Zeger SEs are consistently larger, leading to over-coverage ($\approx 98\%$) and lower power ($\approx 82–96\%$); the upper-bound SEs remain close to the empirical dispersion, yielding coverage near the nominal $95–97\%$ and higher power (up to +7 percentage points). This behavior aligns with theory: the LZ estimator is known to be conservative under design-based randomization, while our bound-based estimator incorporates the attainable negative covariance term, reducing unnecessary conservatism.

When comparing R1 (proportional sampling) and R2 (fixed size), we can see that proportional sampling typically yields smaller standard errors and more stable estimates because sampling intensity scales with cluster size, and fixed-size sampling oversamples smaller clusters relative to their size, which slightly inflates the total variance. The performance patterns of both variance estimators are similar across regimes, but proportional sampling tends to produce marginally more precise inference.

In summary, when treatment effects are clustered or heterogeneous, the proposed bound-based estimator yields tighter and more informative inference than conventional Liang–Zeger cluster-robust standard errors.

\section{Application}\label{sec:application}
Cluster randomized controlled trials in development economics often share a common structure: researchers enumerate all clusters in a defined area, randomly assign treatment at the cluster level, and then sample a subset of individuals within each cluster to measure outcomes. Our empirical illustration applies the proposed variance estimators to a school-based cluster randomized trial in India, studied by \citet{dhar2022reshaping}.

The experiment was implemented in collaboration with the NGO Breakthrough and evaluated by \citet{dhar2022reshaping}. The program sought to change adolescents’ attitudes toward gender equality by integrating discussions on gender norms into the regular school curriculum. Over roughly two and a half school years, the treatment consisted of a series of 45–minute classroom sessions, accompanied by teacher training and school-wide activities. The intervention was delivered to students in grades 7 through 10 in government secondary schools in the state of Haryana.

The study involved 314 schools across four districts, which serve as the natural ``clusters'' in our framework. Randomization took place at the school level: 150 schools were assigned to the treatment group and 164 to the control group. Within each school, the research team then selected a subsample of students for baseline and follow-up surveys. Among the students whose parents returned consent forms, approximately 45 students per school were randomly chosen. This procedure represents the second stage of sampling in our setup, with cluster-specific inclusion probabilities $\pi_c=n_c/N_c$. Crucially, recruitment and baseline data collection were conducted blind to treatment status, ensuring that the sampling probabilities are independent of assignment.

The study collected outcome data at two points in time: the first endline survey occurred approximately three and a half months after the program ended, and the second endline survey was fielded about two to two and a half years later, when most students had completed grade 11 or 12. For illustration, we focus on several of the key attitudinal outcomes reported in the original paper. These include a gender-attitudes index, a self-reported behavior index, and individual survey items capturing attitudes toward women’s education and employment—whether women should be allowed to work outside the home, whether they should go to college, and whether respondents would oppose women attending college. Each outcome is coded so that higher values correspond to more gender-equitable attitudes.

This experimental design maps directly onto our two-stage sampling framework. At the first stage, schools are randomized into treatment and control groups. At the second stage, students are sampled within schools with known inclusion probabilities. The setting is therefore an ideal case for applying our variance estimators, which are specifically designed for experiments with cluster-level treatment assignment and within-cluster sampling. We treat the 314 schools as the finite population of interest, so $p=1$ and $q=150/314$. Within each school, we compute Horvitz–Thompson totals using the known sampling rates $\pi_c$, and we compare our upper-bound (sharp-bound) variance estimator with the standard Liang–Zeger (LZ) cluster-robust estimator.

Table \ref{tab:application_results} presents the results. For each outcome, we report the estimated average treatment effect $\widehat{\tau}$, the estimated standard error based on bound variance estimator $\widehat{\text{se}}^H(\widehat{\tau})$, the corresponding Liang–Zeger standard error $\widehat{\text{se}}^{\text{LZ}}(\widehat{\tau})$, their respective 95\% confidence intervals, and the percentage reduction in confidence-interval width achieved by our estimator.

\begin{table}[htbp]
\renewcommand{\arraystretch}{1.4}
    \centering
    \begin{tabular}{l>{\centering}p{0.8cm}>{\centering}p{0.8cm}>{\centering}>{\centering}p{2.3cm}>{\centering}p{0.8cm}>{\centering}p{2.4cm}>{\centering\arraybackslash}p{1.4cm}}
    \hline
    \hline
         Dependent Variable&  $\widehat{\tau}$&  $\widehat{\text{se}}^{\text{H}}(\widehat{\tau})$&  95\% CI&   $\widehat{\text{se}}^{\text{LZ}}(\widehat{\tau})$& 95\% CI & Reduction in CI \\
         \hline
         \multicolumn{7}{l}{\textbf{Endline 1 Outcomes}} \\
         Gender attitudes index&  0.310&  0.061&  [0.191,0.429]&    0.066& [0.181,0.439] &7.2\%\\
         Self-reported behave index&  0.317&  0.077&  [0.167,0.467]&    0.084& [0.153,0.481] &8.3\%\\
         Women be allowed to work&  0.058&  0.027&  [0.006,0.112]&    0.029& [0.000,0.117] &8.0\%\\
         Women should attend college&  0.070&  0.035&  [0.001,0.140]&    0.039& [-0.006,0.147] &8.9\%\\
        Not oppose women in college&  0.092&  0.031&  [0.032,0.152]&   0.036& [0.022,0.162] &13.8\%\\
         \multicolumn{7}{l}{\textbf{Endline 2 Outcomes}}     \\
         Women be allowed to work &  0.093&  0.038&  [0.018,0.169]&    0.043& [0.009.0.177] &10.3\%\\
         Not oppose women in college &  0.105&  0.033&  [0.040,0.170]&    0.037& [0.032,0.178] &11.0\%\\
         \hline
    \end{tabular}
    \caption{Application Results}
    \caption*{\textit{$\widehat{\tau}$ denotes the estimated average treatment effect. $\widehat{\mathrm{se}}^{H}(\widehat{\tau})$ and $\widehat{\mathrm{se}}^{\mathrm{LZ}}(\widehat{\tau})$ are the standard errors based on the proposed bound variance estimator and the Liang--Zeger cluster-robust estimator, respectively. Confidence intervals are constructed at the 95\% level. The final column reports the percentage reduction in confidence-interval width achieved by the Upper Bound estimator relative to the Liang--Zeger estimator.}}
    \label{tab:application_results}
\end{table}
The results show that our bound-based estimator and the Liang–Zeger estimator produce very similar point estimates and standard errors in this setting, but confidence intervals constructed with the bound estimator are consistently narrower. For the outcomes considered, the reduction in interval width ranges from about 7\% to 14\%. The differences are modest but systematic, reflecting the conservative nature of cluster-robust standard errors.

Substantively, these results highlight two points. First, the design of this study—cluster-level assignment combined with within-cluster sampling—is exactly the type of setting where our method is most relevant. Second, even in large, well-balanced experiments with hundreds of clusters, traditional Liang–Zeger standard errors can remain slightly over-conservative because they do not exploit the structure of the design-based variance. Our estimator achieves modest efficiency gains while maintaining valid coverage.

Overall, this empirical application complements the simulation evidence: when treatment is assigned at the cluster level and sampling occurs within clusters, the proposed upper-bound variance estimator provides tighter inference than standard clustered standard errors, without relying on additional modeling assumptions.

\section{Conclusion}\label{sec:discuss} 
This paper develops a design-based framework for analyzing cluster randomized controlled trials when both cluster-level assignment and two-stage sampling (sampling of clusters and sampling individual units within clusters) are present. We derive the exact finite-population variance of the Horvitz–Thompson estimator for the average treatment effect and establish its asymptotic normality under mild assumptions on outcome moments, sampling probabilities, and cluster-size heterogeneity.

Building on these foundations, we extend the \cite{aronow2014sharp} sharp variance bounds to the cluster setting with within-cluster sampling. This yields a sharp, attainable upper bound for the design variance and a consistent estimator for that bound, along with a conservative variance estimator that becomes exact when cluster-level treatment effects are homogeneous. Together, these results bridge classic survey-sampling theory and modern finite-population causal inference, providing a unified design-based approach to inference in cluster experiments.

Our simulations and empirical application, using the school-based cluster randomized controlled trials of \citet{dhar2022reshaping}, show that the proposed upper-bound variance estimator delivers tighter confidence intervals than the conventional Liang–Zeger cluster-robust estimator, while maintaining correct coverage. The improvement is most pronounced when treatment effects are heterogeneous and clustered, where standard clustered standard errors tend to be conservative. In such cases, our estimator captures the attainable negative covariance components that standard approaches ignore, leading to more efficient inference and higher power.

In summary, the paper contributes exact design-based variance formulas for cluster-level treatment with two-stage sampling, sharp and conservative variance estimators that are consistent under the design, and asymptotic results linking finite-population theory to practical inference. These tools enable researchers to conduct design-faithful and efficient uncertainty quantification in cluster randomized controlled trials.

\bibliography{reference}

\newpage 
\appendix 
\begin{center}
{\large\bf APPENDIX}
\end{center}

\numberwithin{theorem}{section}
\numberwithin{lemma}{section}
\numberwithin{equation}{section}

\section{Proof}
\begin{proof}[Proof of Proposition \ref{prop:exact_var}]

We begin with the mean and variance of the infeasible estimator $\bar \tau$. By (\ref{tau-bar}) and (\ref{HT-alt}), 
\begin{equation} \label{tau-taubar}
\bar \tau - \tau= \frac{1}{C} \sum_{c=1}^C \left [\left(\frac{R_cD_c-pq}{pq} \right)\widetilde Y_c(1) - \left(\frac{R_c(1-D_c)-p(1-q)}{p(1-q)} \right)\widetilde Y_c(0) \right ]
\end{equation}
Only $R_c$ and $D_c$ are random and $E[R_cD_c]=pq$, $E[R_c(1-D_c)]=p(1-q)$, so that $\E\left[\bar \tau\right]=\tau$.

Next
\begin{align*}
(\bar \tau - \tau)^2 &= \frac{1}{C^2} \sum_{c=1}^{C} \left[ \left(\frac{R_cD_c-pq}{pq} \right)\widetilde Y_c(1)- \left(\frac{R_c(1-D_c)-p(1-q)}{p(1-q)} \right)\widetilde Y_c(0)\right]^2 \\
&+\frac{2}{C^2}  \sum_{c=1}^{C} \sum_{c'=1}^{c-1}\left[ \left(\frac{R_cD_c-pq}{pq} \right)\widetilde Y_c(1)- \left(\frac{R_c(1-D_c)-p(1-q)}{p(1-q)} \right)\widetilde Y_c(0)\right] \times\\
&\hspace{1in} \left[ \left(\frac{R_{c'}D_{c'}-pq}{pq} \right)\widetilde Y_{c'}(1)- \left(\frac{R_{c'}(1-D_{c'})-p(1-q)}{p(1-q)} \right)\widetilde Y_{c'}(0)\right] 
\end{align*}
We have
\[
E \left [ (R_c D_c-pq )^2\right ]=pq(1-pq)
\]
\[
E \left [ (R_c (1-D_c)-p(1-q) )^2\right ]=p(1-q)(1-p(1-q))
\]
\[
 E \left [ (R_c D_c-pq )(R_c (1-D_c)-p(1-q) )\right ]=-pqp(1-q)
\]
\[
E \left [ (R_c D_c-pq )(R_{c'} D_{c'}-pq )\right ]=p \widetilde p q \widetilde q- p^2 q^2
\]
\[
E \left [ (R_c D_c-pq )(R_{c'} (1-D_{c'})-p(1-q ))\right ]=p \widetilde p q (1-\widetilde q)- p^2 q(1- q)
\]
\[
E[(R_c(1-D_c)-p(1-q))(R_{c'}(1-D_{c'})-p(1-q))]=p \widetilde p (1-2q +\widetilde q q)-p^2(1-q)^2
\]
with
\[
\widetilde{p}=\frac{S-1}{C-1} \ \ \ \ \ \ \ \widetilde{q}=\frac{S_1-1}{S-1}
\]
Note that 
\[
1-2q+ \widetilde q q  = \frac{(S-S_1)(S-S_1-1)}{S(S-1)}>0
\]
if $S>1$. 

The variance is
\begin{align} \label{vartau}
\mbox{Var}(\bar{\tau}) &= \frac{1-pq}{pq} \frac{1}{C^2}\sum_{c=1}^{C}  \left(\widetilde Y_c(1)\right)^2+\frac{1-p(1-q)}{p(1-q)}   \frac{1}{C^2}\sum_{c=1}^{C}\left(\widetilde Y_c(0)\right)^2 + \frac{2}{C^2}\sum_{c=1}^{C}\widetilde Y_c(1)\widetilde Y_c(0)\\ \nonumber
&+\frac{ \tilde{p} \tilde{q}- p q}{p q}\frac{2}{C^2}  \sum_{c=1}^{C} \sum_{c'=1}^{c-1}\widetilde Y_c(1)\widetilde Y_{c'}(1)- 
\frac{ \widetilde p  (1-\widetilde q)- p (1- q)}{p(1-q)}\frac{2}{C^2}  \sum_{c=1}^{C} \sum_{c'=1}^{c-1}\widetilde Y_c(1)\widetilde Y_{c'}(0) \\ \nonumber
&-\frac{ \widetilde p  (1-\widetilde q)- p (1- q)}{p(1-q)}\frac{2}{C^2}  \sum_{c=1}^{C} \sum_{c'=1}^{c-1}\widetilde Y_c(0)\widetilde Y_{c'}(1)+\frac{  \widetilde p (1-2q +\widetilde q q)-p(1-q)^2}{p(1-q)^2}\frac{2}{C^2}  \sum_{c=1}^{C} \sum_{c'=1}^{c-1}\widetilde Y_c(0)\widetilde Y_{c'}(0) \\ \nonumber
&= \frac{1}{C}\left\{\frac{\frac{1}{C-1}\sum_{c=1}^C\left(\tilY_c(1)-\bar{\tilY}_c(1)\right)^2}{pq}+\frac{\frac{1}{C-1}\sum_{c=1}^C\left(\tilY_c(0)-\bar{\tilY}_c(0)\right)^2}{p(q-1)}-\frac{1}{C-1}\sum_{c=1}^C\left({\tau}_c-\tau\right)^2\right\}.
\end{align}

Now consider $\widehat \tau$. We have 
\[
\widehat \tau- \tau= \widehat \tau -\bar \tau + \bar \tau -\tau= \frac{1}{C} \sum_{c=1}^C \left[ \frac{R_c D_c\left ( \widehat{\tilY}_c-\tilY_c\right ) }{p q} - \frac{R_c (1 - D_c)\left ( \widehat{\tilY}_c-\tilY_c\right )} {p (1 - q)} \right]+\bar \tau - \tau
\]
Also
\[
\E \left[R_cD_c \widehat{\tilY}_c \right ]=\E \left [R_cD_c \sum_{i=1}^{N_c}\frac{R_{i |c}}{\pi_c}\widetilde Y_{ci} \right ]= \left.\E \left [R_cD_c \sum_{i=1}^{N_c}\frac{R_{i |c}}{\pi_c}\widetilde Y_{ci} \right | R_c=1, D_c=1\right]pq= \widetilde Y_c(1) pq
\]
In the same way
\[
\E \left [R_c(1-D_c) \widehat{\tilY}_c \right ]=p(1-q)\widetilde Y_c(0) pq
\]
Therefore $\E\left[\widehat \tau - \tau\right]=0$.

For the variance, we use

\[
\mbox{Var}(\widehat{\tau})=\E[(\widehat{\tau}-\bar \tau)^2]+ 2 \E[(\widehat{\tau}-\bar \tau)(\bar \tau - \tau)]+\E[(\bar \tau - \tau)^2]
\]
with
\begin{align}\label{var tauhat-taubar}
&\E(\widehat{\tau}-\bar \tau)^2]= \frac{1}{C^2} \sum_{c=1}^C \E \left (\left[ \frac{R_c D_c\left ( \widehat{\tilY}_c-\tilY_c\right ) }{p q} - \frac{R_c (1 - D_c)\left ( \widehat{\tilY}_c-\tilY_c\right )} {p (1 - q)} \right]^2 \right )\\ \nonumber
&+\frac{2}{C^2} \sum_{c=1}^C \sum_{c'=1}^{c-1} \E \left (\left[ \frac{R_c D_c\left ( \widehat{\tilY}_c-\tilY_c\right ) }{p q} - \frac{R_c (1 - D_c)\left ( \widehat{\tilY}_c-\tilY_c\right )} {p (1 - q)} \right] \times \right. \\ \nonumber
&\hspace{1.2in}  \left.\left[ \frac{R_{c'} D_{c'}\left ( \widehat{\tilY}_{c'}-\tilY_{c'}\right ) }{p q} - \frac{R_{c'} (1 - D_{c'})\left ( \widehat{\tilY}_{c'}-\tilY_{c'}\right )} {p (1 - q)} \right] \right )\\ \nonumber
&=\frac{1}{C^2} \sum_{c=1}^C \left ( \E \left [ \frac{R_c D_c\left (\widehat{\tilY}_c-\tilY_c\right )^2 }{p^2 q^2} \right ] +  \E \left [ \frac{R_c (1- D_c)\left (\widehat{\tilY}_c-\tilY_c\right )^2 }{p^2 (1-q)^2} \right ]  \right )\\ \nonumber
&+\frac{2}{C^2} \sum_{c=1}^C \sum_{c'=1}^{c-1} \left(   \E \left [  \frac{R_c D_cR_{c'} D_{c'}\left (\widehat{\tilY}_c-\tilY_c\right )  \left ( \widehat{\tilY}_{c'}-\tilY_{c'}\right ) }{p^2 q^2} \right ] \right. \\ \nonumber 
&- \E \left [\frac{R_c D_cR_{c'} (1 - D_{c'})\left ( \widehat{\tilY}_c-\tilY_c\right ) \left ( \widehat{\tilY}_{c'}-\tilY_{c'}\right )} {p^2 q(1 - q)}\right ] -\E \left [\frac{R_c (1 - D_c)R_{c'} D_{c'} \left ( \widehat{\tilY}_c-\tilY_c\right ) \left ( \widehat{\tilY}_{c'}-\tilY_{c'}\right ) }{p^2 q(1-q)} \right ] \\ \nonumber 
&+ \left. \E \left [\frac{R_c (1 - D_c)R_{c'} (1 - D_{c'})\left ( \widehat{\tilY}_c-\tilY_c\right )\left ( \widehat{\tilY}_{c'}-\tilY_{c'}\right )} {p^2 (1 - q)^2}\right ]\right)
\end{align}
By iterated expectations
\begin{align*}
\E \left [R_c D_c\left (\widehat{\tilY}_c-\tilY_c\right )^2  \right ]=  \left . \E \left [\left (\widehat{\tilY}_c-\tilY_c\right )^2  \right | R_c=1, D_c=1\right ]pq
\end{align*}
Further
\begin{align*}
&\left . \E \left [\left (\widehat{\tilY}_c-\tilY_c\right )^2  \right | R_c=1, D_c=1\right ] \\
=&\left . \E \left [\left (\widehat{\tilY}_c(1)-\tilY_c(1)\right )^2  \right | R_c=1, D_c=1\right ]\\
=&
 \sum_{i=1}^{N_c}\E \left[\frac{(R_{i|c}-\pi_c)^2}{\pi_c^2} \left ( \widetilde Y_{ci}(1)\right )^2\right ] + 2 \sum_{i=1}^{N_c} \sum_{i'=1}^{i-1}\E \left[ \frac{(R_{i|c}-\pi_c)(R_{i'|c}-\pi_c)}{\pi_c^2}  \tilY_{ci}(1)\widetilde Y_{ci'}(1)\right ]\\
 =&
 \frac{1-\pi_c}{\pi_c}\sum_{i=1}^{N_c} \left ( \widetilde Y_{ci}(1)\right )^2 + 2 \frac{\widetilde \pi_c- \pi_c}{\pi_c} \sum_{i=1}^{N_c} \sum_{i'=1}^{i-1}  \tilY_{ci}(1)\widetilde Y_{ci'}(1)
\end{align*}
\begin{align}
    \E \left [ \frac{R_c D_c\left (\widehat{\tilY}_c-\tilY_c\right )^2 }{p^2 q^2} \right ]&=\frac{(1-\pi_c)(1-\tilde{\pi}_c)}{pq\pi_c(\pi_c-\tilde{\pi}_c)}\left(\frac{1}{N_c-1}\left(\sum_{i=1}^{N_c}\left(\tilY_{ci}(1)-\bar{\tilY}_c(1)\right)\right)^2\right)
\end{align}
By iterated expectations
\begin{align*}
&\E \left[R_c (1-D_c)\left (\widehat{\tilY}_c-\tilY_c\right )^2  \right ] =  \E \left[ \left(\widehat{\tilY}_c-\tilY_c\right )^2  \Bigg | R_c=1, D_c=0\right ]p(1-q)
\end{align*}
and
\begin{align*}
 \left . \E \left [\left (\widehat{\tilY}_c-\tilY_c\right )^2  \right | R_c=1, D_c=0\right ]=\frac{1-\pi_c}{\pi_c}\sum_{i=1}^{N_c} \left ( \widetilde Y_{ci}(0)\right )^2 + 2 \frac{\widetilde \pi_c- \pi_c}{\pi_c} \sum_{i=1}^{N_c} \sum_{i'=1}^{i-1}  \tilY_{ci}(0)\widetilde Y_{ci'}(0)
\end{align*}
\begin{align}
    \E \left [ \frac{R_c (1-D_c)\left (\widehat{\tilY}_c-\tilY_c\right )^2 }{p^2 (1-q)^2} \right ]&=\frac{(1-\pi_c)(1-\tilde{\pi}_c)}{p(1-q)\pi_c(\pi_c-\tilde{\pi}_c)}\left(\frac{1}{N_c-1}\left(\sum_{i=1}^{N_c}\left(\tilY_{ci}(0)-\bar{\tilY}_c(0)\right)\right)^2\right)
\end{align}
By iterated expectations
\begin{align*}
 &\E \left [ R_c D_cR_{c'} D_{c'}\left (\widehat{\tilY}_c-\tilY_c\right )  \left ( \widehat{\tilY}_{c'}-\tilY_{c'}\right ) \right ] \\
 =&\E \left [ \left (\widehat{\tilY}_c-\tilY_c\right )  \left ( \widehat{\tilY}_{c'}-\tilY_{c'}\right )\Bigg |R_c=1, D_c=1,R_{c'}=1,  D_{c'}=1 \right ]p \widetilde{p}q \widetilde{q}
\end{align*}
and
\begin{align*}
&\E \left [ \left (\widehat{\tilY}_c-\tilY_c\right )  \left ( \widehat{\tilY}_{c'}-\tilY_{c'}\right )\Bigg |R_c=1, D_c=1,R_{c'}=1,  D_{c'}=1 \right ] \\
=& \E \left [ \left (\widehat{\tilY}_c(1)-\tilY_c (1)\right )  \left ( \widehat{\tilY}_{c'}(1)-\tilY_{c'}(1)\right ) \bigg |R_c=1, D_c=1,R_{c'}=1,  D_{c'}=1\right ] \\
=&\E \left [ \left (\sum_{i=1}^{N_c} \frac{R_{i|c}-\pi_c}{\pi_c}\tilY_{ci}(1) \right )\left (\sum_{i=1}^{N_{c'}} \frac{R_{i|c'}-\pi_{c'}}{\pi_{c'}}\tilY_{c'i}(1) \right ) \right ]=0
\end{align*}
In the same way
\[
\E \left [R_c D_cR_{c'} (1 - D_{c'})\left ( \widehat{\tilY}_c-\tilY_c\right ) \left ( \widehat{\tilY}_{c'}-\tilY_{c'}\right )\right ]= 0
\]
\[
\E \left [ R_c (1 - D_c)R_{c'} D_{c'} \left ( \widehat{\tilY}_c-\tilY_c\right ) \left ( \widehat{\tilY}_{c'}-\tilY_{c'}\right )  \right ] =0
\]
\[
\E \left [R_c (1 - D_c)R_{c'} (1 - D_{c'})\left ( \widehat{\tilY}_c-\tilY_c\right )\left ( \widehat{\tilY}_{c'}-\tilY_{c'}\right )\right ]=0
\]
Putting these together, 
\begin{align}
    &\E(\widehat{\tau}-\bar \tau)^2]\\ \nonumber
    =&\frac{1}{C^2}\left\{\sum_{c=1}^C\frac{(1-\pi_c)(1-\tilde{\pi}_c)}{\pi_c(\pi_c-\tilde{\pi}_c)}\left(\frac{\frac{1}{N_c-1}\sum_{i=1}^{N_c}(\tilY_{ci}(1)-\bar{\tilY}_c(1))^2}{pq} + \frac{\frac{1}{N_c-1}\sum_{i=1}^{N_c}(\tilY_{ci}(0)-\bar{\tilY}_c(0))^2}{p(1-q)}\right)\right\}.
\end{align}
Finally, for the cross-product
\begin{align*}
\E[(\bar \tau - \tau)(\widehat{\tau}-\bar \tau)]=&  \E \left [\left (\frac{1}{C} \sum_{c=1}^C \left [\left(\frac{R_cD_c-pq}{pq} \right)\widetilde Y_c(1) - \left(\frac{R_c(1-D_c)-p(1-q)}{p(1-q)} \right)\widetilde Y_c(0) \right ] \right )  \right .\\
&\hspace{0.2in}\times\left . \left (\frac{1}{C} \sum_{c=1}^C \left[ \frac{R_c D_c\left ( \widehat{\tilY}_c-\tilY_c\right ) }{p q} - \frac{R_c (1 - D_c)\left ( \widehat{\tilY}_c-\tilY_c\right )} {p (1 - q)} \right]\right ) \right ]\\
=&
 \E \left [\left (\frac{1}{C} \sum_{c=1}^C \left(\frac{R_cD_c-pq}{pq} \right)\widetilde Y_c(1) \right ) \left (\frac{1}{C} \sum_{c=1}^C \frac{R_c D_c\left ( \widehat{\tilY}_c-\tilY_c\right ) }{p q} \right ) \right ]\\
 -&\E \left [\left (\frac{1}{C} \sum_{c=1}^C \left(\frac{R_cD_c-pq}{pq} \right)\widetilde Y_c(1) \right )\left(\frac{1}{C} \sum_{c=1}^C \frac{R_c (1 - D_c)\left ( \widehat{\tilY}_c-\tilY_c\right )} {p (1 - q)} \right ) \right ] \\
 -&\E \left [ \left (\frac{1}{C} \sum_{c=1}^C\left(\frac{R_c(1-D_c)-p(1-q)}{p(1-q)} \right)\widetilde Y_c(0) \right )\left (\frac{1}{C} \sum_{c=1}^C \frac{R_c D_c\left ( \widehat{\tilY}_c-\tilY_c\right ) }{p q} \right ) \right ] \\
 +& \E \left [ \left (\frac{1}{C} \sum_{c=1}^C\left(\frac{R_c(1-D_c)-p(1-q)}{p(1-q)} \right)\widetilde Y_c(0) \right )\left(\frac{1}{C} \sum_{c=1}^C \frac{R_c (1 - D_c)\left ( \widehat{\tilY}_c-\tilY_c\right )} {p (1 - q)} \right ) \right ]
\end{align*}
Now by iterated expectations and because $\E[R_{c'}D_{c'}|R_c=1,D_c=1]= \widetilde{p}\widetilde{q}$,
\[
\E\left [ (R_cD_c - pq)R_c D_c \left (  \widehat{\tilY}_c-\tilY_c \right )\right ]= E \left [  \left (  \widehat{\tilY}_c(1)-\tilY_c(1) \right ) | R_c=1, D_c=1\right ](1 - pq)pq=0
\]
\[
\E\left [ (R_{c'}D_{c'} - pq)R_c D_c \left (  \widehat{\tilY}_c-\tilY_c \right )\right ]= E \left [  \left (  \widehat{\tilY}_c(1)-\tilY_c(1) \right ) | R_c=1, D_c=1\right ](1 - pq)pq=0
\]
\[
\E\left [ (R_{c'}D_{c'} - pq)R_c(1-  D_c) \left (  \widehat{\tilY}_c-\tilY_c \right )\right ]= E \left [  \left (  \widehat{\tilY}_c(0)-\tilY_c(0) \right ) | R_c=1, D_c=0\right ](1 - pq)pq=0
\]
Summarizing all the above terms, we could obtain
\begin{align*}
    \Var\left(\widehat{\tau}\right) &= \frac{1}{C}\left\{\frac{\frac{1}{C-1}\sum_{c=1}^C\left(\tilY_c(1)-\bar{{Y}}(1)\right)^2}{pq}+\frac{\frac{1}{C-1}\sum_{c=1}^C\left(\tilY_c(0)-\bar{{Y}}(0)\right)^2}{p(1-q)}-\frac{1}{C-1}\sum_{c=1}^C\left({\tau}_c-{\tau}\right)^2\right\} \\ 
        & + \frac{1}{C^2}\left\{\sum_{c=1}^C\frac{(1-\pi_c)(1-\tilde{\pi}_c)}{\pi_c(\pi_c-\tilde{\pi}_c)}\left(\frac{\frac{1}{N_c-1}\sum_{i=1}^{N_c}(\tilY_{ci}(1)-\bar{\tilY}_c(1))^2}{pq} + \frac{\frac{1}{N_c-1}\sum_{i=1}^{N_c}(\tilY_{ci}(0)-\bar{\tilY}_c(0))^2}{p(1-q)}\right)\right\}.
\end{align*}

\end{proof}

\begin{proof}[Proof of Lemma \ref{lm:clt_s1}]
Under the assumption that $\omega = o\left(C^{\frac{1}{3}\left(1-2\beta\right)}\right)$ and $1/p=C/S=O\left(C^{\beta}\right)$, we could have $S^{-1}\max_{c}\tilY_c(d)^2=o(1)$:
\begin{align*}
    S^{-1}\max_{c}\tilY_c(d)^2 = S^{-1}\left(\frac{1}{C}\max_c\tilY_c(d)^4\right)^{1/2}C^{1/2} 
\end{align*}
Assuming that $\frac{1}{N}\sum_{c=1}^C\sum_{i=1}^{N_c}Y_{ci}(d)^4=O(1)$, we could obtain
\begin{align*}
    \frac{1}{C}\max_c\tilY_c(d)^4 &\leq \frac{1}{C}\sum_{c=1}^C\left(\sum_{i=1}^{N_c}\frac{Y_{ci}(d)}{\Bar{N}}\right)^4 \leq \frac{1}{C\Bar{N}^4}\sum_{c=1}^CN_c^3\sum_{i=1}^{N_c}Y_{ci}(d)^4 \\ 
    &\leq \frac{1}{C\Bar{N}}\left(\max_{c}\frac{N_c}{\Bar{N}}\right)^3\sum_{c=1}^C\sum_{i=1}^{N_c}Y_{ci}(d)^4 \\ 
    &= \omega^3\frac{1}{N}\sum_{c=1}^C\sum_{i=1}^{N_c}Y_{ci}(d)^4 = O\left(\omega^3\right)
\end{align*}
Therefore, $S^{-1}\max_{c}\tilY_c(d)^2=O\left(S^{-1}C^{1/2}\omega^{3/2}\right)=o(1)$. 

If $\sigma^2(1)\nrightarrow 0$ and $(1-p)\left(\lambda-\frac{1-q}{q}\right)\neq 0$, where $\lambda = \frac{\sigma(0)}{\sigma(1)}$; or if $\sigma^2(0)\nrightarrow 0$ and $(1-p)\left(\lambda-\frac{q}{1-q}\right)\neq 0$, where $\lambda = \frac{\sigma(1)}{\sigma(0)}$, we could have $C\Var(\bar{\tau})\nrightarrow 0$, then the asymptotic normality of $\bar{\tau}$ follows from Lemma A.2., Theorem 1 in \cite{su2021model} and Theorem 1 in \cite{li2017general}.
\end{proof}

\begin{proof}[Proof of Theorem \ref{thm:clt_s2}]
Theorem \ref{thm:clt_s2} follows from Theorem 2.1 and Remark 2.4 in \cite{ohlsson1989asymptotic} if we could prove conditions (C1), (C2), and (2.8). With our notations, these conditions can be written respectively as 
\begin{align}
&\frac{\bar{\tau}-\tau}{\sqrt{\Var(\bar{\tau})}}\overset{d}{\to} \mathcal{N}(0,1) \label{eqn:fs} \\
&\frac{\sum_{c=1}^C\mathbb{E}\left[\left(\xhat_c-\xtil_c\right)^4\right]}{\left(\Var\left(\widehat{\tau}\right)\right)^2} \overset{p}{\to} 0 \label{eqn:ss} \\
&\mathbb{E}\left[R_{c}R_{c'}\right] - \mathbb{E}\left[R_{c}\right]\mathbb{E}\left[R_{c'}\right]\leq 0, c\neq c' \label{eqn:sampling_fs}
\end{align}
where $\xhat_c=\frac{1}{C}\left(\frac{R_cD_c\widehat{\tilY}_c}{pq}-\frac{R_c(1-D_c)\widehat{\tilY}_c}{p(1-q)}\right)$, $\xtil_c=\frac{1}{C}\left(\frac{R_cD_c\tilY_c}{pq}-\frac{R_c(1-D_c)\tilY_c}{p(1-q)}\right)$, and $\widehat{\tilY}_c=\sum_{i=1}^{N_c}\frac{Y_{ci}R_{i\mid c}}{\Bar{N}\pi_c}$, $\tilY_c=\sum_{i=1}^{N_c}\frac{Y_{ci}}{\Bar{N}}$.

Condition \ref{eqn:fs} has been verified in Lemma \ref{lm:clt_s1}, and Condition \ref{eqn:sampling_fs} is satisfied naturally in this design of simple sampling and random assignment without replacement. It remains to prove Condition \ref{eqn:ss}. Let $\xhat_{c1}=\frac{1}{C}\frac{R_cD_c\widehat{\tilY}_c}{pq}$, $\xhat_{c0}=\frac{1}{C}\frac{R_c(1-D_c)\widehat{\tilY}_c}{p(1-q)}$, $\xtil_{c1}=\frac{1}{C}\frac{R_cD_c{\tilY}_c}{pq}$, $\xtil_{c0}=\frac{1}{C}\frac{R_c(1-D_c)\tilY_c}{p(1-q)}$.
\begin{align*}
    \E\left[\left(\xhat_c-\xtil_c\right)^4\right] &= \E\left[\left(\left(\xhat_{c1}-\xtil_{c1}\right)-\left(\xhat_{c0}-\xtil_{c0}\right)\right)^4\right] \\
    &= \E\left[\left(\xhat_{c1}-\xtil_{c1}\right)^4+\left(\xhat_{c0}-\xtil_{c0}\right)^4\right]
\end{align*}
\begin{align*}
    &\E\left[\left(\xhat_{c1}-\xtil_{c1}\right)^4\right] = \frac{1}{C^4}\E\left[\left(\sum_{i=1}^{N_c}\frac{R_{i\mid c}-\pi_c}{\pi_c}\tilY_{ci}(1)\right)^4\right] \\
    &=\frac{1}{C^4}\Bigg\{\sumNc\E[w_{ci}^4]\tilY_{ci}(1)^4 + 4\sumNc\sum_{j\neq i}\E[w_{ci}^3w_{cj}]\tilY_{ci}(1)^3\tilY_{cj}(1) + 3\sumNc\sum_{j\neq i}\E[w_{ci}^2w_{cj}^2]\tilY_{ci}(1)^2\tilY_{cj}(1)^2 \\
    &+6\sumNc\sum_{j\neq i}\sum_{k\neq i\neq j}\E[w_{ci}^2w_{cj}w_{ck}]\tilY_{ci}(1)^2\tilY_{cj}(1)\tilY_{ck}(1) \\
    &+ \sumNc\sum_{j\neq i}\sum_{k\neq i\neq j}\sum_{l\neq i\neq j\neq k}\E[w_{ci}w_{cj}w_{ck}w_{cl}]\tilY_{ci}(1)\tilY_{cj}(1)\tilY_{ck}(1)\tilY_{cl}(1)\Bigg\}
\end{align*}
where $w_{ci}=\frac{R_{i\mid c}-\pi_c}{\pi_c}$
\begin{align*}
    \E[w_{ci}^4] &= \frac{1-4\pi_c+6\pi_c^2-4\pi_c^3}{\pi_c^3} \\
    \E[w_{ci}^3w_{cj}] &= \frac{\Dcij-3\Dcij\pi_c+3\Dcij\pi_c^2-\pi_c^2+3\pi_c^3-3\pi_c^4}{\pi_c^4} \\
    \E\left[w_{ci}^2w_{cj}^2\right]&=\frac{\Dcij-4\Dcij\pi_c+4\Dcij\pi_c^2+2\pi_c^3-3\pi_c^4}{\pi_c^4}\\
    \E[w_{ci}^2w_{cj}w_{ck}] &= \frac{\Dcijk-2\Dcijk\pi_c+5\Dcij\pi_c^2-2\Dcij\pi_c+\pi_c^3-3\pi_c^4}{\pi_c^4} \\
    \E[w_{ci}w_{cj}w_{ck}w_{cl}] &= \frac{\Dcijkl-4\Dcijk\pi_c+6\Dcij\pi_c^2-3\pi_c^4}{\pi_c^4}
\end{align*}
where $\Dcij=\E[R_{i\mid c}R_{j\mid c}]=\frac{n_c(n_c-1)}{N_c(N_c-1)}$, $\Dcijk=\E[R_{i\mid c}R_{j\mid c}R_{k\mid c}]=\frac{n_c(n_c-1)(n_c-2)}{N_c(N_c-1)(N_c-2)}$, and $\Dcijkl = \E[R_{i\mid c}R_{j\mid c}R_{k\mid c}R_{l\mid c}]=\frac{n_c(n_c-1)(n_c-2)(n_c-3)}{N_c(N_c-1)(N_c-2)(N_c-3)}$.
\begin{align*}
    &\sumNc\sum_{j\neq i}\sum_{k\neq i\neq j}\sum_{l\neq i\neq j\neq k}\E[w_{ci}w_{cj}w_{ck}w_{cl}]\tilY_{ci}(1)\tilY_{cj}(1)\tilY_{ck}(1)\tilY_{cl}(1) \\
    &\leq \E[w_{ci}w_{cj}w_{ck}w_{cl}]\frac{N_c^3}{\Bar{N}^4}\sum_{i=1}^{N_c}Y_{ci}(1)^4.
\end{align*}
Then
\begin{align*}
    C^2\sum_{c=1}^C\E\left[\left(\xhat_c-\xtil_c\right)^4\right] &= O\left(C^{-2}\right)\left(\sum_{c=1}^C\frac{N_c^3}{\Bar{N}^4}\sum_{i=1}^{N_c}Y_{ci}(1)^4\right)\\
    &\leq O\left(C^{-1}\right)\omega^3\left(\frac{1}{C\Bar{N}}\sum_{c=1}^C\sum_{i=1}^{N_c}Y_{ci}(1)^4\right)\\
    &=o\left(C^{-\frac{2}{3}\left(1-2\beta\right)}\right)
\end{align*}
under the assumption that $\omega=O\left(C^{\frac{1}{3}\left(1-2\beta\right)}\right)$. 

And since $C^2\Var\left(\widehat{\tau}\right)^2 \geq \left(\frac{(C-S_1)\sigma^2\left(\tilY(1)\right)}{S_1}+\frac{(C-S_0)\sigma^2\left(\tilY(0)\right)}{S_0}+2\sigma\left(\tilY(1), \tilY(0)\right)\right)^2$ is bounded away from 0, we have
\begin{align*}
    \frac{\sum_{c=1}^C\E\left[\left(\xhat_c-\xtil_c\right)^4\right]}{\Var(\widehat{\tau})^2} \rightarrow 0.
\end{align*}
All three conditions are verified.
\end{proof}

\begin{proof}[Proof of Proposition \ref{prop:bound}]
Under the notation 
\begin{align*}
    \widetilde{\Var}\left(\widehat{\tau}\right) &= \widetilde{V}_1 + \widetilde{V}_0 + \frac{2}{C}\sigma\left(\widehat{\tilY}(1),\widehat{\tilY}(0)\right)
\end{align*}
\begin{align*}
    \widetilde{\Var}^H\left(\widehat{\tau}\right) = \widetilde{V}_1 + \widetilde{V}_0 + \frac{2}{C}\sigma^H\left(\widehat{\tilY}(1),\widehat{\tilY}(0)\right), \quad \widetilde{\Var}^L\left(\widehat{\tau}\right) = \widetilde{V}_1 + \widetilde{V}_0 + \frac{2}{C}\sigma^L\left(\widehat{\tilY}(1),\widehat{\tilY}(0)\right)
\end{align*}
\begin{align*}
    \widehat{\Var}^H\left(\widehat{\tau}\right) = \widehat{V}_1 + \widehat{V}_0 + \frac{2}{C}\widehat{\sigma}^H\left(\widehat{\tilY}(1),\widehat{\tilY}(0)\right), \quad \widehat{\Var}^L\left(\widehat{\tau}\right) = \widehat{V}_1 + \widehat{V}_0 + \frac{2}{C}\widehat{\sigma}^L\left(\widehat{\tilY}(1),\widehat{\tilY}(0)\right)
\end{align*}
we would like to prove that 
\begin{align*}
    \widehat{V}_1 \overset{p}{\to} \widetilde{V}_1, \quad \widehat{V}_0 \overset{p}{\to} \widetilde{V}_0 
\end{align*}
\begin{align*}
    \widehat{\sigma}^H\left(\widehat{\tilY}(1),\widehat{\tilY}(0)\right) \overset{p}{\to} \sigma^H\left(\widehat{\tilY}(1),\widehat{\tilY}(0)\right), \quad \widehat{\sigma}^L\left(\widehat{\tilY}(1),\widehat{\tilY}(0)\right) \overset{p}{\to} \sigma^L\left(\widehat{\tilY}(1),\widehat{\tilY}(0)\right)
\end{align*}
Conditional on the second-stage sampling, 
\begin{align*}
    \E\left[\widehat{V}_1 \Bigg| \left\{\widehat{\tilY}_c(1)\right\}_{c=1}^C\right] &= \E\left[\frac{1}{C^2}\sum_{c=1}^C\sum_{c'=1}^C\frac{\Delta_{cc'}^1R_cD_cR_{c'}D_{c'}\widehat{\tilY}_c\widehat{\tilY}_{c'}}{\E\left[R_cD_cR_{c'}D_{c'}\right]\E\left[R_cD_c\right]\E\left[R_{c'}D_{c'}\right]}\Bigg| \left\{\widehat{\tilY}_c(1)\right\}_{c=1}^C\right] \\
    &+  \E\left[\frac{1}{C^2}\sum_{c=1}^C\frac{R_cD_c}{\E\left[R_cD_c\right]}\sum_{i=1}^{N_c}\sum_{j=1}^{N_c} \frac{\Delta_{ij}^cR_{i\mid c}R_{j\mid c}Y_{ci}Y_{cj}}{\E\left[R_{i\mid c}R_{j\mid c}\right]\E\left[R_{i\mid c}\right]\E\left[R_{j\mid c}\right]}\Bigg| \left\{\widehat{\tilY}_c(1)\right\}_{c=1}^C\right] \\
    &= \E\left[\frac{1}{C^2}\sum_{c=1}^C\sum_{c'=1}^C\frac{\Delta_{cc'}^1R_cD_cR_{c'}D_{c'}\widehat{\tilY}_c(1)\widehat{\tilY}_{c'}(1)}{\E\left[R_cD_cR_{c'}D_{c'}\right]\E\left[R_cD_c\right]\E\left[R_{c'}D_{c'}\right]}\Bigg| \left\{\widehat{\tilY}_c(1)\right\}_{c=1}^C\right] \\
    &+  \E\left[\frac{1}{C^2}\sum_{c=1}^C\frac{R_cD_c}{\E\left[R_cD_c\right]}\sum_{i=1}^{N_c}\sum_{j=1}^{N_c} \frac{\Delta_{ij}^cR_{i\mid c}R_{j\mid c}Y_{ci}(1)Y_{cj}(1)}{\E\left[R_{i\mid c}R_{j\mid c}\right]\E\left[R_{i\mid c}\right]\E\left[R_{j\mid c}\right]}\Bigg| \left\{\widehat{\tilY}_c(1)\right\}_{c=1}^C\right] \\
    &= \frac{1}{C^2}\left(\sum_{c=1}^C\sum_{c'=1}^C\frac{\Delta_{cc'}^1\widehat{\tilY}_c(1)\widehat{\tilY}_{c'}(1)}{\E\left[R_cD_c\right]\E\left[R_{c'}D_{c'}\right]}+\sum_{c=1}^C\sum_{i=1}^{N_c}\sum_{j=1}^{N_c} \frac{\Delta_{ij}^cR_{i\mid c}R_{j\mid c}Y_{ci}(1)Y_{cj}(1)}{\E\left[R_{i\mid c}R_{j\mid c}\right]\E\left[R_{i\mid c}\right]\E\left[R_{j\mid c}\right]}\right) \\
    &= \widetilde{V}_1
\end{align*}
\begin{align*}
    &\E\left[\left(\widehat{V}_1-\widetilde{V}_1\right)^2 \Bigg|  \left\{\widehat{\tilY}_c(1)\right\}_{c=1}^C\right] \\
    =& \E\Bigg[\Bigg(\frac{1}{C^2}\sum_{c=1}^C\sum_{c'=1}^C\left(\frac{R_cD_cR_{c'}D_{c'}}{\E\left[R_cD_cR_{c'}D_{c'}\right]}-1\right)\frac{\Delta_{cc'}^1\widehat{\tilY}_c\widehat{\tilY}_{c'}}{\E\left[R_cD_c\right]\E\left[R_{c'}D_{c'}\right]} \\
    +& \sum_{c=1}^C\left(\frac{R_cD_c}{\E[R_cD_c]}-1\right)\sum_{i=1}^{N_c}\sum_{j=1}^{N_c}\frac{\Delta_{ij\mid c}R_{i|c}R_{j|c}Y_{ci}Y_{cj}}{\E[R_{i|c}R_{j|c}]\E[R_i|c]\E[R_{j|c}]}\Bigg)^2\Bigg|  \left\{\widehat{\tilY}_c(1)\right\}_{c=1}^C\Bigg] \\
    =& \frac{1}{C^4}\E\Bigg\{\left(\sum_{c=1}^C\sum_{c'=1}^C\left(\frac{R_cD_cR_{c'}D_{c'}}{\E\left[R_cD_cR_{c'}D_{c'}\right]}-1\right)\frac{\Delta_{cc'}^1\widehat{\tilY}_c(1)\widehat{\tilY}_{c'}(1)}{\E\left[R_cD_c\right]\E\left[R_{c'}D_{c'}\right]}\right)^2 \\
    +& 2\left(\sum_{c=1}^C\sum_{c'=1}^C\left(\frac{R_cD_cR_{c'}D_{c'}}{\E\left[R_cD_cR_{c'}D_{c'}\right]}-1\right)\frac{\Delta_{cc'}^1\widehat{\tilY}_c(1)\widehat{\tilY}_{c'}(1)}{\E\left[R_cD_c\right]\E\left[R_{c'}D_{c'}\right]}\right)\\
    &\quad \times\left( \sum_{c=1}^C\left(\frac{R_cD_c}{\E[R_cD_c]}-1\right)\sum_{i=1}^{N_c}\sum_{j=1}^{N_c}\frac{\Delta_{ij\mid c}R_{i|c}R_{j|c}Y_{ci}(1)Y_{cj}(1)}{\E[R_{i|c}R_{j|c}]\E[R_{i|c}]\E[R_{j|c}]}\right) \\
    +& \left(\sum_{c=1}^C\left(\frac{R_cD_c}{\E[R_cD_c]}-1\right)\sum_{i=1}^{N_c}\sum_{j=1}^{N_c}\frac{\Delta_{ij\mid c}R_{i|c}R_{j|c}Y_{ci}(1)Y_{cj}(1)}{\E[R_{i|c}R_{j|c}]\E[R_i|c]\E[R_{j|c}]} \right) \Bigg|  \left\{\widehat{\tilY}_c(1)\right\}_{c=1}^C\Bigg\}
\end{align*}
For simplicity, denote $\Gamma_{cc'}=\frac{\Delta_{cc'}^1\widehat{\tilY}_c(1)\widehat{\tilY}_{c'}(1)}{\E\left[R_cD_c\right]\E\left[R_{c'}D_{c'}\right]}$, $\Lambda_c = \sum_{i=1}^{N_c}\sum_{j=1}^{N_c}\frac{\Delta_{ij\mid c}R_{i|c}R_{j|c}Y_{ci}Y_{cj}}{\E[R_{i|c}R_{j|c}]\E[R_{i|c}]\E[R_{j|c}]}$
\begin{align*}
    &\E\left[\left(\sum_{c=1}^C\sum_{c'=1}^C\left(\frac{R_cD_cR_{c'}D_{c'}}{\E\left[R_cD_cR_{c'}D_{c'}\right]}-1\right)\Gamma_{cc'}\right)\left(\sum_{c''=1}^C\sum_{c'''=1}^C\left(\frac{R_{c''}D_{c''}R_{c'''}D_{c'''}}{\E\left[R_{c''}D_{c''}R_{c'''}D_{c'''}\right]}-1\right)\Gamma_{c''c'''}\right) \Bigg|  \left\{\widehat{\tilY}_c(1)\right\}_{c=1}^C\right]\\
    =& \E\left[\sum_{c=1}^C\sum_{c'=1}^C\sum_{c''=1}^C\sum_{c'''=1}^C\left(\frac{R_cD_cR_{c'}D_{c'}}{\E\left[R_cD_cR_{c'}D_{c'}\right]}-1\right)\left(\frac{R_{c''}D_{c''}R_{c'''}D_{c'''}}{\E\left[R_{c''}D_{c''}R_{c'''}D_{c'''}\right]}-1\right)\Gamma_{cc'}\Gamma_{c''c'''} \Bigg|  \left\{\widehat{\tilY}_c(1)\right\}_{c=1}^C\right]\\
    =&2\sum_{c=1}^C\sum_{c'=1}^C\left(\frac{\E[R_cD_cR_{c'}D_{c'}]}{\E\left[R_cD_cR_{c'}D_{c'}\right]^2}-1\right)\Gamma_{cc'}^2+4\sum_{c=1}^C\sum_{c'=1}^C\sum_{c''\neq c, c'}\left(\frac{\E[R_cD_cR_{c'}D_{c'}R_{c''}D_{c''}]}{\E\left[R_cD_cR_{c'}D_{c'}\right]\E\left[R_cD_cR_{c''}D_{c''}\right]}-1\right)\Gamma_{cc'}\Gamma_{cc''}\\
    =&2\sum_{c=1}^C\sum_{c'=1}^C\left(\frac{1}{\E[R_cD_cR_{c'}D_{c'}]}-1\right)\Gamma_{cc'}^2+4\sum_{c=1}^C\sum_{c'=1}^C\sum_{c''\neq c, c'}\left(\frac{\E[R_cD_cR_{c'}D_{c'}R_{c''}D_{c''}]}{\E[R_cD_cR_{c'}D_{c'}]\E[R_cD_cR_{c''}D_{c''}]}-1\right)\Gamma_{cc'}\Gamma_{cc''}
\end{align*}
\begin{align*}
    &\left|\frac{1}{C^4}\sum_{c=1}^C\sum_{c'=1}^C\left(\frac{1}{\E[R_cD_cR_{c'}D_{c'}]}-1\right)\Gamma_{cc'}^2 \right| 
    \\=&\sum_{c=1}^C\left(\frac{1-pq}{pq}\right)\left(\frac{1-pq}{pq}\widehat{\tilY}_c(1)^2\right)^2 + \sum_{c=1}^C\sum_{c'\neq c}\left(\frac{1}{pq\tilde{p}\tilde{q}}-1\right)\left(\frac{\tilde{p}\tilde{q}-pq}{pq}\widehat{\tilY}_c(1)\widehat{\tilY}_{c'}(1)\right)^2\\
    =&\frac{1}{C^4}\left(\left(\frac{1-pq}{pq}\right)^3\sum_{c=1}^C\widehat{\tilY}_c(1)^4 +\left(\frac{1}{pq\tilde{p}\tilde{q}}-1\right)\left(\frac{\tilde{p}\tilde{q}-pq}{pq}\right)^2\sum_{c=1}^C\widehat{\tilY}_c(1)^2\left(\sum_{c'=1}^C\widehat{\tilY}_{c'}(1)^2-\widehat{\tilY}_c(1)^2\right)\right) \\
    \leq & \frac{1}{C^4}\left(\left(\frac{1-pq}{pq}\right)^3\sum_{c=1}^C\widehat{\tilY}_c(1)^4 +\left(\frac{1}{pq\tilde{p}\tilde{q}}-1\right)\left(\frac{\tilde{p}\tilde{q}-pq}{pq}\right)^2(C-1)\sum_{c=1}^C\widehat{\tilY}_c(1)^4 \right)\\
    =&\frac{1}{C^4}\left(\frac{1-pq}{pq}\right)^3\left(\frac{S_1C}{(S_1-1)(C-1)}\right)\sum_{c=1}^C\widehat{\tilY}_c(1)^4 \\
    \leq &  \frac{1}{C^4}\left(\frac{1-pq}{pq}\right)^3\left(\frac{S_1C}{(S_1-1)(C-1)}\right)\sum_{c=1}^C\frac{N_c^3}{\bar{N}^4\pi_c^4}\sum_{i=1}^{N_c}Y_{ci}^4 \\
    \leq &  C^{-3}\left(\frac{1-pq}{pq}\right)^3\frac{S_1C}{(S_1-1)(C-1)}\omega^3\min_c\pi_c^{-4}\frac{1}{N}\sum_{c=1}^C\sum_{i=1}^{N_c}Y_{ci}^4= o\left(C^{-2+\beta}\min_c\pi_c^{-4}\right)
\end{align*}
\begin{align*}
    &\left| \frac{1}{C^4}\sum_{c=1}^C\sum_{c'=1}^C\sum_{c''\neq c, c'}\left(\frac{\E[R_cD_cR_{c'}D_{c'}R_{c''}D_{c''}]}{\E[R_cD_cR_{c'}D_{c'}]\E[R_cD_cR_{c''}D_{c''}]}-1\right)\Gamma_{cc'}\Gamma_{cc''}\right|  \\
    \leq& \left|\frac{1}{C^4}\Bigg(\sum_{c=1}^C\sum_{c'\neq c}\left(\frac{pq\tilde{p}\tilde{q}}{pqpq\tilde{p}\tilde{q}}-1\right)\frac{1-pq}{pq}\widehat{\tilY}_c(1)^2\frac{}{}\widehat{\tilY}_{c'}(1)\widehat{\tilY}_{c''}(1) \right|\\
    +& \left|\frac{1}{C^4} \sum_{c=1}^C\sum_{c'\neq c}\sum_{c''\neq c'\neq c}\left(\frac{pq\tilde{p}\tilde{q}\tilde{\tilde{p}}\tilde{\tilde{q}}}{(pq\tilde{p}\tilde{q})^2}-1\right)\left(\frac{\tilde{p}\tilde{q}-pq}{pq}\right)^2\widehat{\tilY}_c(1)^2\widehat{\tilY}_{c'}(1)\widehat{\tilY}_{c''}(1)\Bigg)\right|  \\
    \leq & \frac{1}{C^4}\left(\frac{1-pq}{pq}\right)^3\sum_{c=1}^C\widehat{\tilY}_c(1)^4 + \frac{1}{C^4}\left(\frac{1-pq}{pq}\right)^3\frac{S_1C}{(S_1-1)(C-1)}\sum_{c=1}^C\widehat{\tilY}_c(1)^4 \\
    =& o\left(C^{-2+\beta}\min_c\pi_c^{-4}\right)
\end{align*}
\begin{align*}
    &\E\left[\left(\sum_{c=1}^C\sum_{c'=1}^C\left(\frac{R_cD_cR_{c'}D_{c'}}{\E\left[R_cD_cR_{c'}D_{c'}\right]}-1\right)\Gamma_{cc'}\right)\left( \sum_{c=1}^C\left(\frac{R_cD_c}{\E[R_cD_c]}-1\right)\Lambda_{c}\right)\Bigg|  \left\{\widehat{\tilY}_c(1)\right\}_{c=1}^C\right]\\
    =& \E\left[\sum_{c=1}^C\sum_{c'=1}^C\sum_{c'''=1}^C\left(\frac{R_cD_cR_{c'}D_{c'}}{\E\left[R_cD_cR_{c'}D_{c'}\right]}-1\right)\left( \frac{R_{c'''}D_{c'''}}{\E[R_{c'''}D_{c'''}]}-1\right)\Gamma_{cc'}\Lambda_{c'''}\Bigg|  \left\{\widehat{\tilY}_c(1)\right\}_{c=1}^C\right]\\
    =&2\sum_{c=1}^C\sum_{c'=1}^C\left(\frac{\E[R_cD_cR_{c'}D_{c'}]}{\E\left[R_cD_cR_{c'}D_{c'}\right]\E[R_{c}D_{c}]}-1\right)\Gamma_{cc'}\Lambda_{c} \\
    =&2\sum_{c=1}^C\sum_{c'=1}^C\left(\frac{1}{pq}-1\right)\Gamma_{cc'}\Lambda_{c} 
\end{align*}
Notice that $|\Lambda_c| \leq \sum_{i=1}^{N_c}\sum_{j=1}^{N_c}\frac{|\tilY_{ci}(1)||\tilY_{cj}(1)|}{\pi_c^2}\leq \frac{N_c\sum_{i=1}^{N_c}\tilY_{ci}(1)^2}{\pi_c^2}$
Therefore, 
\begin{align*}
    &\frac{1}{C^4}\left|\sum_{c=1}^C\sum_{c'=1}^C\left(\frac{1}{pq}-1\right)\Gamma_{cc'}\Lambda_{c} \right|\\
    \leq&\frac{1}{C^4}\left(\left|\sum_{c=1}^C \left(\frac{1-pq}{pq}\right)^2\widehat{\tilY}_c(1)^2\frac{N_c\sum_{i=1}^{N_c}\tilY_{ci}(1)^2}{\pi_c^2} \right| \right.\\
    &+ \left. \left|\sum_{c=1}^C\sum_{c'\neq c}\left(\frac{1-pq}{pq}\right)\left(\frac{\tilde{p}\tilde{q}-pq}{pq}\right)\widehat{\tilY}_c(1)\widehat{\tilY}_{c'}(1)\frac{N_c\sum_{i=1}^{N_c}\tilY_{ci}(1)^2}{\pi_c^2}\right|\right) \\
    \leq & C^{-3}\left(\frac{1-pq}{pq}\right)^2\omega^3\min_c\pi_c^{-4}\frac{1}{N}\sum_{c=1}^C\sum_{i=1}^{N_c}Y_{ci}^4 = o\left(C^{-2}\min_c\pi_c^{-4}\right)
\end{align*}

\begin{align*}
    &\E\left[\left( \sum_{c=1}^C\left(\frac{R_cD_c}{\E[R_cD_c]}-1\right)\Lambda_{c}\right)\left( \sum_{c'=1}^C\left(\frac{R_{c'}D_{c'}}{\E[R_{c'}D_{c'}]}-1\right)\Lambda_{c'}\right)\Bigg|  \left\{\widehat{\tilY}_c(1)\right\}_{c=1}^C\right]\\
    =& \E\left[\sum_{c=1}^C\sum_{c'=1}^C\left(\frac{R_cD_c}{\E[R_cD_c]}-1\right) \left(\frac{R_{c'}D_{c'}}{\E[R_{c'}D_{c'}]}-1\right)\Lambda_{c}\Lambda_{c'}\Bigg|  \left\{\widehat{\tilY}_c(1)\right\}_{c=1}^C\right]\\
    =& \sum_{c=1}^C\left(\frac{\E[R_cD_c]}{\E[R_cD_c]^2}-1\right) \Lambda_{c}^2 = \sum_{c=1}^C\left(\frac{1}{pq}-1\right) \Lambda_{c}^2 
\end{align*}
\begin{align*}
    \frac{1}{C^4}\sum_{c=1}^C\left(\frac{1}{pq}-1\right) \Lambda_{c}^2 \leq & \frac{1}{C^4}\left(\frac{1-pq}{pq}\right)\omega^3\min_c\pi_c^{-4}\sum_{c=1}^C\sum_{i=1}^{N_c}Y_{ci}^4 = o\left(C^{-2-\beta}\min_c\pi_c^{-4}\right)
\end{align*}
Summing the above, we have 
$\E\left[\left(\widehat{V}_1-\widetilde{V}_1\right)^2 \Bigg|  \left\{\widehat{\tilY}_c(1)\right\}_{c=1}^C\right] = o\left(C^{-2+\beta}\right) $, therefore, $\widehat{V}_1\rightarrow\widetilde{V}_1$ in probability. Similarly, it can be shown that $\widehat{V}_0\rightarrow\widetilde{V}_0$ in probability.
\end{proof}

\begin{proof}[Proof of Proposition \ref{prop:conserve}]
For the treated part, we have
\begin{align*}
    &\E\left[\frac{1}{C^2}\frac{1}{1-pq}\sumC\sum_{c'=1}^C\frac{\Delta_{cc'}^1R_cD_cR_{c'}D_{c'}\widehat{\tilY}_c\widehat{\tilY}_{c'}}{\E[R_cD_cR_{c'}D_{c'}]\E[R_cD_c]\E[R_{c'}D_{c'}]}\right] \\
    =& \frac{1}{C^2(1-pq)}\Bigg\{\E\left[\sumC\frac{\left(\E[R_cD_c]-\E[R_cD_c]^2\right)\E[R_cD_c]}{\E[R_cD_c]\E[R_cD_c]^2}\left(\sumNc\frac{R_{i\mid c}}{\pi_c}\tilY_{ci}(1)\right)^2\right] \\
    &+ \sumC\sum_{c'\neq c}\frac{\left(\E[R_cD_cR_{c'}D_{c'}-\E[R_cD_c]\E[R_{c'}D_{c'}]\right)\E[R_cD_cR_{c'}D_{c'}]}{\E[R_cD_cR_{c'}D_{c'}]\E[R_cD_c]\E[R_{c'}D_{c'}]}\left(\sumNc\frac{R_{i\mid c}}{\pi_c}\tilY_{ci}(1)\right)\left(\sum_{i'=1}^{N_{c'}}\frac{R_{i'\mid c'}}{\pi_{c'}}\tilY_{c'i'}(1)\right)\Bigg\} \\
    =&\frac{1}{C^2(1-pq)}\left\{\sum_{c=1}^C\frac{1-pq}{pq}\left[\sumNc\frac{1}{\pi_c}\tilY_{ci}(1)^2+\sumNc\sum_{j\neq i}\frac{\tilde{\pi}_c}{\pi_c}\tilY_{ci}(1)\tilY_{cj}(1)\right] + \sumC\sum_{c'\neq c}\frac{\tilde{p}\tilde{q}-pq}{pq}\tilY_c(1)\tilY_{c'}(1)\right\} \\
    =& \frac{1}{C^2(1-pq)}\left\{\sumC\frac{1-pq}{pq}\left[\frac{1-\tilde{\pi}_c}{\pi_c}\sumNc\tilY_{ci}(1)^2+\frac{\tilde{\pi}_c}{\pi_c}\tilY_c(1)^2 \right] \right.\\
    &+ \left. \frac{pq-1}{pq}\frac{1}{C-1}\left[\left(\sumC\tilY_c(1)\right)^2-\sumC\tilY_c(1)^2\right]\right\} \\
    =&\frac{1}{C^2pq}\Bigg\{\sumC\left[\frac{1-\tilde{\pi}_c}{\pi_c}\sumNc\tilY_{ci}(1)^2+\frac{\tilde{\pi}_c}{\pi_c}\tilY_c(1)^2 -\tilY_c(1)^2\right] + \frac{C}{C-1}\left[\sumC\tilY_c(1)^2-C\bar{\tilY}(1)^2\right]\Bigg\}\\
    =& \frac{1}{C^2}
    \sumC\frac{(1-\tilde{\pi}_c)(1-\pi_c)}{\pi_c(\pi_c-\tilde{\pi}_c)}\frac{\frac{1}{N_c-1}\sumNc\left(\tilY_{ci}(1)-\bar{\tilY}_c(1)\right)^2}{pq}+\frac{1}{C}\frac{\frac{1}{C-1}\sumC\left(\tilY_c(1)-\bar{\tilY}(1)\right)^2}{pq}.
\end{align*}
Similarly, for the control part, we have
\begin{align*}
    &\E\left[\frac{1}{C^2}\frac{1}{1-p(1-q)}\sumC\sum_{c'=1}^C\frac{\Delta_{cc'}^1R_c(1-D_c)R_{c'}(1-D_{c'})\widehat{\tilY}_c\widehat{\tilY}_{c'}}{\E[R_c(1-D_c)R_{c'}(1-D_{c'})]\E[R_c(1-D_c)]\E[R_{c'}(1-D_{c'})]}\right] \\
    =& \frac{1}{C^2}
    \sumC\frac{(1-\tilde{\pi}_c)(1-\pi_c)}{\pi_c(\pi_c-\tilde{\pi}_c)}\frac{\frac{1}{N_c-1}\sumNc\left(\tilY_{ci}(0)-\bar{\tilY}_c(0)\right)^2}{p(1-q)}+\frac{1}{C}\frac{\frac{1}{C-1}\sumC\left(\tilY_c(0)-\bar{\tilY}(0)\right)^2}{p(1-q)}.
\end{align*}
\end{proof}

\section{Auxiliary Lemmas}
\begin{lemma}[Lemma 1 in \cite{aronow2014sharp} (Hoeffding)]\label{lm:sharp_bound}
Given only $G$ and $F$ and no other information on the joint distribution of $(\tilY_c(1),\tilY_c(0))$, the bound 
\begin{align*}
\sigma^L\left(\tilY_c(1),\tilY_c(0)\right)\leq\sigma\left(\tilY_c(1),\tilY_c(0)\right)\leq \sigma^H\left(\tilY_c(1),\tilY_c(0)\right)
\end{align*}
is sharp. The upper bound is attained if $\tilY_c(1)$ and $\tilY_{c}(0)$ are comonotonic. The lower bound is attained if $\tilY_c(1)$ and $\tilY_{c}(0)$ are countermonotonic.

\end{lemma}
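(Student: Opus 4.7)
The plan is to prove this classical Fréchet–Hoeffding/Hoeffding covariance bound by (i) expressing $\sigma\bigl(\tilY(1),\tilY(0)\bigr)$ as a functional of the bivariate CDF through Hoeffding's covariance identity, (ii) applying the pointwise Fréchet–Hoeffding bounds on joint CDFs, and (iii) exhibiting explicit comonotone and countermonotone couplings that attain each bound. Throughout, the relevant probability space is the uniform distribution over cluster indices $c \in \{1,\dots,C\}$, so expectations, marginals $G,F$, and the covariance $\sigma(\cdot,\cdot)$ are all with respect to this empirical distribution on the finite population of clusters.

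First, I invoke Hoeffding's identity: for any pair $(X,Y)$ with finite second moments, joint CDF $H$, and marginals $G,F$,
\[
\operatorname{Cov}(X,Y) \;=\; \int_{\mathbb{R}^2}\bigl[H(x,y) - G(x)F(y)\bigr]\,dx\,dy.
\]
Applied to $(\tilY_c(1),\tilY_c(0))$, this reduces the problem of bounding $\sigma$ to bounding the integrand pointwise. Second, I apply the Fréchet–Hoeffding inequalities
\[
W(y_1,y_0) \,:=\, \max\{G(y_1)+F(y_0)-1,\,0\} \;\le\; H(y_1,y_0) \;\le\; \min\{G(y_1),F(y_0)\} \,=:\, M(y_1,y_0),
\]
which hold for every joint CDF $H$ with the prescribed marginals. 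Integrating the inequality through Hoeffding's identity and invoking the standard quantile identities
\[
\int_{\mathbb{R}^2} M(y_1,y_0)\,dy_1\,dy_0 \,-\, \bar Y(1)\bar Y(0) \;=\; \int_0^1 G^{-1}(u)F^{-1}(u)\,du \,-\, \bar Y(1)\bar Y(0),
\]
\[
\int_{\mathbb{R}^2} W(y_1,y_0)\,dy_1\,dy_0 \,-\, \bar Y(1)\bar Y(0) \;=\; \int_0^1 G^{-1}(u)F^{-1}(1-u)\,du \,-\, \bar Y(1)\bar Y(0),
\]
yields exactly $\sigma^L(\tilY(1),\tilY(0)) \le \sigma(\tilY(1),\tilY(0)) \le \sigma^H(\tilY(1),\tilY(0))$.

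For sharpness, I exhibit attaining couplings. Let $U \sim \text{Uniform}(0,1)$. The comonotone coupling $(G^{-1}(U),F^{-1}(U))$ has joint CDF equal to $M$ and yields $E[\tilY(1)\tilY(0)] = \int_0^1 G^{-1}(u)F^{-1}(u)\,du$, so its covariance equals $\sigma^H$. The countermonotone coupling $(G^{-1}(U),F^{-1}(1-U))$ has joint CDF equal to $W$ and covariance equal to $\sigma^L$. In the finite-population setting, these correspond to permuting cluster labels so that the ranks of $\tilY_c(1)$ and $\tilY_c(0)$ are identical (comonotone) or exactly reversed (countermonotone); both permutations preserve the empirical marginals $G$ and $F$, so they lie in the feasible set.

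The main obstacle, and the one subtlety worth flagging, is conceptual rather than technical: in the finite-population framework, the joint distribution of $(\tilY_c(1),\tilY_c(0))$ is not a free parameter we may choose arbitrarily—it is determined by the realized (but unobserved) pairing of potential outcomes across clusters. Sharpness must therefore be interpreted as follows: among all pairings (i.e., permutations of cluster labels preserving the two marginals $G$ and $F$), the maximum and minimum of the sample covariance are attained by the comonotone and countermonotone pairings respectively. This identification reduces to the classical rearrangement inequality for finite sequences, which guarantees that $\sum_h a_{(h)} b_{(h)}$ and $\sum_h a_{(h)} b_{(C-h+1)}$ are, respectively, the largest and smallest values of $\sum_h a_{\pi(h)} b_{\sigma(h)}$ over permutations $\pi,\sigma$ with $a_{(\cdot)},b_{(\cdot)}$ the order statistics—exactly the discrete content of the Fréchet–Hoeffding bound here.
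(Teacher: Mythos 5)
Your proposal is correct, but note that the paper does not actually prove this statement: it is listed under ``Auxiliary Lemmas'' and imported verbatim as Lemma~1 of \cite{aronow2014sharp}, who in turn attribute it to Hoeffding, so there is no in-paper argument to compare against. What you have written is the classical proof of that imported result --- Hoeffding's covariance identity, the pointwise Fr\'echet--Hoeffding bounds on the joint CDF, the quantile representation of the two extremal covariances, and attainment by the comonotone and countermonotone couplings --- and it is sound. Two small remarks. First, the displayed identities involving $\int_{\mathbb{R}^2} M(y_1,y_0)\,dy_1\,dy_0$ and $\int_{\mathbb{R}^2} W(y_1,y_0)\,dy_1\,dy_0$ are not literally well defined, since a joint CDF tends to $1$ and these integrals diverge; the integrable object is the difference $H(y_1,y_0)-G(y_1)F(y_0)$, and the correct statement is that $\int \bigl[M-GF\bigr]$ and $\int\bigl[W-GF\bigr]$ equal $\int_0^1 G^{-1}(u)F^{-1}(u)\,du - \bar Y(1)\bar Y(0)$ and $\int_0^1 G^{-1}(u)F^{-1}(1-u)\,du - \bar Y(1)\bar Y(0)$ respectively. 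Second, your closing observation is the one that actually matters in this paper's setting: $G$ and $F$ are empirical CDFs on $C$ atoms, so the entire argument reduces to the Hardy--Littlewood--P\'olya rearrangement inequality for the finite sums $\sum_h a_{(h)}b_{\pi(h)}$, and ``sharpness'' means attainment over pairings of the two observed marginals rather than over arbitrary joint laws. You could in fact take that discrete route as the whole proof and dispense with the measure-theoretic machinery; either way the conclusion matches the cited lemma, including the characterization of when each bound is attained.
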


\end{document}